\newcommand{\tr}{{\mathrm{Tr}}}
\newcommand{\gf}{{\mathbb{F}}}
\theoremstyle{thmstyleone}%
\newtheorem{theorem}{Theorem}
\newtheorem{definition}{Definition}
\newtheorem{lemma}[theorem]{Lemma}
\theoremstyle{thmstyletwo}%
\theoremstyle{thmstylethree}%
\begin{document}

\title[On the $ c $-differential spectrum of power functions over finite fields]{On the $ c $-differential spectrum of power functions over finite fields}

\author[1]{\fnm{Kun} \sur{Zhang}}\email{kzhang@my.swjtu.edu.cn}

\author*[1]{\fnm{Haode} \sur{Yan}}\email{hdyan@swjtu.edu.cn}


\affil[1]{\orgdiv{School of Mathematics},  \orgaddress{\street{Southwest Jiaotong University}, \city{Chengdu}, \postcode{610031}, \country{China}}}


\abstract{Recently, a new concept called multiplicative differential was introduced by Ellingsen \textit{et al}. Inspired by this pioneering work, power functions with low $ c $-differential uniformity were constructed. Wang \textit{et al.} defined the $ c $-differential spectrum of a power function \cite{wang2021several}. In this paper, we present some properties of the $ c $-differential spectrum of a power function. Then we apply these properties to investigate the $ c $-differential spectra of some power functions. A new class of AP$c$N function is also obtained.}

\keywords{$ c $-differential uniformity, Almost perfect $c$-nonlinear, $ c $-differential spectrum}


\pacs[MSC Classification]{11T71, 94A60}

\maketitle

\section{Introduction}\label{sec1}

Differential cryptanalysis \cite{1991Differential, 1993DifferentialBS} is the first statistical attack to decipher iterative block ciphers. Substitution boxes (S-boxes for short) play an crucial role in the field of symmetric block ciphers, which can be seen as cryptographic
functions over finite fields. Let $ {\mathbb{F}_{q}} $ be the finite fields with $ q $ elements. For a function $F(x)$ from $\gf_{q}$ to itself, the main tools to handle $F$ regarding the differential
attack are the Difference Distribution Table (DDT for short) and the differential uniformity $\Delta_F$, introduced by Nyberg \cite{NK1993} in 1994. For any $ a, b \in \gf_{2^n}$, the DDT entry at point $ (a,b) $, denoted by $ \delta_F(a,b) $, is defined as \[{\delta _F}\left( {a,b} \right) = \lvert\{x \in \gf_{q}: ~F(x+a)-F(x)=b\}\rvert, \]
where $ \lvert S \rvert $ denotes the cardinality of a set $S$. The differential uniformity of the function $ F(x) $, denoted by $ {\Delta _F} $, is defined as
\[{\Delta _F}=\max\limits_{a \in \gf_{q}^*}\max\limits_{b \in \gf_{q}}  \delta_F(a,b) ,\]
where $\gf_{q}^*=\gf_{q}\setminus\{0\}$. When $F$ is used as an S-box inside a cryptosystem, the smaller the value $\Delta_F$ is, the better $F$ to the resistance against differential attack. When $\Delta_{F}=1$, $F$ is called perfect nonlinear (PN for short) function. When $\Delta_F=2$, $F$ is called almost perfect nonlinear (APN for short) function. Note that PN functions over even characteristic finite fields do not exist. PN and APN functions play an important role in both theory and applications.
The recent progress can be found in \cite{CM1997,DNiho,DO1968,DWelch,HRS,HS,ZW2009,ZKW2009} and their references.
Power functions with low differential uniformity serve as good candidates for the design of S-boxes not only because of their strong resistance to differential attacks but also for the usually low implementation cost in hardware. When $F(x)$ is  a power function, i.e.,  $F(x)=x^d$ for an integer $d$,  one can easily see that $\delta_F(a,b)=\delta_F(1,{b/{a^d}})$  for all $a\in \gf_{q}^*$ and $b\in \gf_{q}$.
That is to say,  the differential properties of $F(x)$ is completely determined by the values of $\delta_F(1,b)$ as $b$ runs through $\gf_{q}$.
Therefore, Blondeau, Canteaut and Charpin first defined the difference spectrum of a power function in \cite{BCC}. Denote
\[\omega_i= \lvert \left\{b\in \gf_{q}: \delta_F(1, b)=i\right\} \rvert ,\,\,0\leq i\leq \Delta_F,\]
where $\Delta_{F}$ is the differential uniformity of $F$. The differential spectrum of $F$ is defined as the multi-set
\[ DS_F =\left\{\omega_i>0~:~0\leq i \leq \Delta_F \right\}.\]

The distribution of DDT of a power function can be deduced via its differential spectrum. Moreover, the differential spectrum is also an important notion for estimating its resistance against variants of differential cryptanalysis (\cite{BCC}, \cite{bracken2010highly}, \cite{charpin2014sparse}). However, it is difficult to completely determine the differential spectrum of a power function. For a known results on differential spectrum of power functions, the readers are referred to \cite{BCC, BCC2, BP, XYY, Li, Dobbertin2001, 2020XZL, CHNC, Yan, LRF,XY2017,YXLHXL}. 
The distribution of DDT of a power function can be deduced via its differential spectrum.  

In \cite{2002BN}, the authors used a new type of differential, namely multiplicative differential, that is quite useful from a practical perpective for ciphers that utilize modular multiplication as a primitive operation. It is an extension of differential cryptanalysis, and it cryptanalyzes some existing ciphers (like a variant of the well-known IDEA cipher). The authors argue that one should look at other types of differential for a cryptographic function $ F $, not only the usual $ (F(x+a),F(x)) $ but $ (F(x+a),cF(x)) $. Moreover, they first introduced the $c$-Differential  Distribution Table ($c$DDT for short) . For a function $F$ from $\gf_{{q}}$ to itself and $c\in\gf_{{q}}$, the entry at point $(a,b)$ of the $c$DDT, denoted by $_c\delta_F(a,b)$, is defined as

\[_c\delta_F(a,b)=\#\{x\in\gf_q: F(x+a)-cF(x)=b\}.\]
The corresponding $c$-differential uniformity is defined as follows.
\begin{definition}(\cite{2020CEP})
	Let $ \gf_{q} $ denote the finite field with $ q $ elements, where $ q $ is a prime power. For a function $ F:{\gf_{q}} \to {\gf_{q}} $, and $a, b,  c\in \gf_{q} $, we call 
	\[_c{\Delta _F} = \max \left\{ {_c{\delta _F}\left(a, b \right):a,b \in {\gf_{{q}}},\mathrm{and}~a \ne 0~\mathrm{if}~c = 1} \right\}\]
	the $ c $-differential uniformity of $ F $.
\end{definition}
If $ _c{\Delta _F}=\delta $, then we say that $ F $ is differentially $ (c,\delta) $-uniform. Similarly, the smaller the value $_c\Delta_F$ is, the better $F$ 
to the resistance against multiplicative differential attack.  If the $ c $-differential uniformity of $ F $ equals $ 1 $, then $ F $ is called a perfect $ c $-nonlinear (P$ c $N) function. P$c$N function over odd characteristic finite fields are also called $ c $-planar functions. If the $ c $-differential uniformity of $ F $ is $ 2 $, then $ F $ is called an almost perfect $ c $-nonlinear (AP$ c $N) function. It is easy to conclude, for $ c=1 $ and $ a\ne 0 $, the $ c $-differential uniformity becomes the usual differential uniformity, and the P$ c $N and AP$ c $N functions become PN and APN functions  respectively. It is know that APN functions over finite fields of even characteristic have lowest differential uniformity. However, for the $ c $-differential uniformity, there exist P$c$N functions over even characteristic finite fields. There are a few functions with low $c$-differential uniformity reported. The readers can refer to \cite{wang2021several,2021SomeZha,tu2021class,yan2021on1,2021SYZ,2020CEP,2020MD,hasan2021c,WLZ,BC}.

Similarly, when $F(x)$ is a power function, one easily sees that $_c{\delta_F(a,b)}$=$_c{\delta_F(1,{b/{a^d}})}$  for all $a\in \gf_{{q}}^*$ and $b\in \gf_{q}$.
That is to say,  the $ c $-differential spectrum of $F(x)$ is completely determined by the values of $_c\delta_F(1,b)$ as $b$ runs through $\gf_{q}$.
Therefore,  the $ c $-differential spectrum of a power function can be defined as follows.

\begin{definition}(\cite{wang2021several}) Let $ F\left( x \right) = {x^d} $ be a power function over $\gf_{q} $ with $c$-differential uniformity $_c\Delta_F$. Denote by $ {}_c{\omega _i} = \# \{ {b \in \gf_{q} : {{}_c{\delta _F}( 1,b) = i} .} \} $ for each $ 0 \le i \le {}_c{\Delta _F} $. The $ c $-differential spectrum of $ F $ is defined to be the multi-set \[\mathbb{S} = \{ {{}_c{\omega _i}:{0 \le i \le {}_c{\Delta _F}} ~\mathrm{and}{}~_c{\omega _i} > 0} \}.\]
\end{definition}
When $c=1$, the $c$-differential spectrum becomes the usual differential spectrum. As far as we know, only two classes of power functions have known $ c $-differential spectra. In \cite{wang2021several}, the authors computed the $c$-differential spectrum of Gold function. The $c$-differential spectrum of $x^d$ over $\gf_{2^{4m}}$ was determined in \cite{tu2021class}, where $d=2^{3m}+2^{2m}+2^{m}-1$. {{We summarized the known results in Table \ref{table1} as well as the results obtained in this paper.} }The rest of this paper is organized as follows. Section II introduces some notation and useful lemmas. In Section III, we present the properties of $ c $-differential spectrum. In Section IV, we calculate the $ c $-differential spectrum of several power functions with low $ c $-differential uniformity. In Section V, we construct an infinite family of AP$ c $N power mappings over $ \gf_{{5^n}} $, and determine its $ c $-differential spectrum. Section VI concludes this paper.

\begin{table}[h]\label{table1}
	\begin{center}
		\begin{minipage}{\textwidth}
			\caption{Power functions $F(x)=x^d$ over $\gf_{p^n}$ with known $ c $-differential spectrum}\label{table-1}
			\begin{tabular}{@{}lllll@{}}
				\toprule
	$p$&$d$ & Condition & $_c\Delta_{F}$ & Ref.\\
				\midrule
				2& $2^{3m}+2^{2m}+2^{m}-1$&\makecell[l]{$n=4m$, $0,1\ne c\in \gf_{2^{n}}$,\\$c^{1+2^{2m}}=1$} 
				& $2$ &\cite{tu2021class}\\
				odd& $p^{k}+1$& \makecell[l]{$1\ne c\in \gf_{p^{\gcd(n,k)}}$ and $ \frac{n}{\gcd(n,k) }$ is odd, \\ or $ c\notin \gf_{p^{\gcd(n,k)}} $, $ n $ is even and $ k=\frac{n}{2} $} & $2$ &\cite{wang2021several}\\
				2& $2^{n}-2$& $c\ne 0$, $\tr_{n}(c)=\tr_{n}(c^{-1})=1$ &$2$ &Thm \ref{Inversep=2APcN}\\
				2&$2^n-2$ &$c\ne 0$, $\tr_{n}(c)=0$ or $\tr_{n}(c^{-1})=0$ & $3$  &Thm \ref{Inversep=2APcN}\\
				odd&$p^{n}-2$ &\makecell[l]{$c=4,4^{-1}$, or\\$\chi(c^{2}-4c)=-1$ and $\chi(1-4c)=-1$}  & 2 &Thm \ref{thm-oddAPcN1}\\
				odd&$p^{n}-2$ &\makecell[l]{$c\ne 0,4,4^{-1}$,\\$\chi(c^{2}-4c)=1 $ or $\chi(1-4c)=1$}  & 3  &Thm \ref{thm-oddAPcN1}\\
				3&$\frac{3^{n}+3}{2}$ &$c=-1$, $n$ even &$2$ &Thm \ref{p=3-niseven}\\
				3&$3^n-3$ &$c=-1$, $n = 0({\bmod 4})$ & 6  &Thm \ref{p=3d=-2}\\
				3&$3^n-3$ &$c=-1$, $n\ne0({\bmod 4})$ & 4  &Thm \ref{p=3d=-2}\\
				odd&$\frac{p^{k}+1}{2}$ & $c=-1$, $ \gcd(n,k)=1$, $k$ odd & $ \frac{p+1}{2} $ & Thm \ref{poddp+1/2-1}, \ref{poddp+1/2-3} \\
				5&$\frac{p^{n}-3}{2}$ & $c=-1$ & 2 & Thm \ref{p=5spectrum} \\
				\botrule
			\end{tabular}
			\footnotetext{ $ \tr_{n}\left(  \cdot  \right) $ denotes the absolute trace mapping from $ \gf_{{2^n}} $ to $ \gf_2 $.}
			\footnotetext{$ \chi\left(  \cdot  \right) $ denotes the quadratic multiplicative character on $ \gf_{{p^n}}^* $.}
		\end{minipage}
	\end{center}
\end{table}

\section{Preliminaries}
In this section, we first fix some notation and list some facts which will be used in this paper unless otherwise stated.

\begin{itemize}
	\item Let $\gf_q$ denote the finite field with $q$ elements.
	\item $\gf_{{q}}^*=\gf_{{q}}\setminus\{0\}$, $ \gf_{{q}}^\#  = {{\rm{\gf}}_{{q}}}\backslash \{ 0, - 1\}  $.
	\item $ \Delta_{c}(x)=(x+1)^{d}-cx^{d} $, where $ c\in \gf_{{q}} $.
	\item $ {\delta _c}\left( b \right) = \# \left\{ {x \in {\gf_{{q}}}:{\Delta _c}\left( x \right) = b} \right\} $.
	\item $ \chi $ denotes the quadratic multiplicative character on $ \gf_{{q}}$, i.e., 
	\[\chi(x) = \left\{ 
	\begin{aligned}
		1,~~&\mathrm{if}~{x}~\mathrm{a~square~},\\
		0,~~&\mathrm{if}~x=0,\\
		-1,~~&\mathrm{if}~{x}~\mathrm{a~nonsquare~}.
	\end{aligned} \right.\]
	It is well-known that $\sum_{x\in\gf_q}\chi(x)=0$.
	\item $ {S_{i,j}}: = \left\{ {x \in \gf_{{q}}^\# :\chi (x) = i,\chi (x + 1) = j} \right\} $, where $ i,j\in \{\pm1\} $.
	\item $ {S_{1,1}} \cup {S_{ - 1, - 1}} \cup {S_{1, - 1}} \cup {S_{ - 1,1}} = \gf_{{q}}^\# $.
\end{itemize}

Secondly, we introduce some lemmas which will be used in the sequel. To determine the greatest common divisor of integers, the following lemma plays an important role in the rest of this paper.

\begin{lemma}
	let $ p,k,n $ be integers greater than or equal to $ 1 $. Then
	\[\gcd \left( {{p^k} + 1,{p^n} - 1} \right) = \left\{ \begin{aligned}{}
		\frac{{{2^{\gcd \left( {2k,n} \right)}} - 1}}{{{2^{\gcd \left( {k,n} \right)}} - 1}},~~~&\mathrm{if}~{p} = 2,\\
		2,~~~~~~~~~~~~~~&\mathrm{if}~\frac{n}{{\gcd (n,k)}}~\mathrm{is}~\mathrm{odd},\\
		{p^{\gcd \left( {k,n} \right)}} + 1,~~~~&\mathrm{if}~\frac{n}{{\gcd \left( {n,k} \right)}}~\mathrm{is}~\mathrm{even}.
	\end{aligned} \right.\]
\end{lemma}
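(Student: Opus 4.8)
Set $a=\gcd\!\left(p^{k}+1,\,p^{n}-1\right)$ and $g=\gcd(k,n)$, and write $k=gk'$, $n=gn'$ with $\gcd(k',n')=1$. The plan is to first reduce the problem to a computation of $\gcd(p^{s}-1,p^{t}-1)=p^{\gcd(s,t)}-1$, and then remove a possible spurious factor of $2$ by a $2$-adic valuation argument. Since $p^{k}+1\mid p^{2k}-1$, the identity $\gcd(p^{s}-1,p^{t}-1)=p^{\gcd(s,t)}-1$ gives $a\mid\gcd\!\left(p^{2k}-1,p^{n}-1\right)=p^{\gcd(2k,n)}-1$. Because $\gcd(2k,n)=g\gcd(2k',n')=g\gcd(2,n')$, this exponent equals $g$ when $n'=n/\gcd(n,k)$ is odd and $2g$ when $n'$ is even, which is exactly the case split in the statement.

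\textbf{Case $n/g$ odd.} Here $a\mid p^{g}-1$, and since $g\mid k$ we have $p^{g}-1\mid p^{k}-1$, so $a\mid\gcd\!\left(p^{k}+1,p^{k}-1\right)\mid 2$. If $p$ is odd then $2\mid p^{k}+1$ and $2\mid p^{n}-1$, so $a=2$; if $p=2$ then $a$ is odd, hence $a=1=\frac{2^{\gcd(2k,n)}-1}{2^{\gcd(k,n)}-1}$, both exponents being $g$.

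\textbf{Case $n/g$ even.} Then $k'=k/g$ is odd, since $\gcd(k',n')=1$. From the factorization $x^{k'}+1=(x+1)\sum_{j=0}^{k'-1}(-1)^{j}x^{j}$ with $x=p^{g}$ we get $p^{g}+1\mid p^{k}+1$, and $p^{g}+1\mid p^{2g}-1\mid p^{n}-1$ because $2g\mid n$; hence $p^{g}+1\mid a$. Conversely $a\mid p^{2g}-1=(p^{g}-1)(p^{g}+1)$, so $b:=a/(p^{g}+1)$ is an integer dividing $p^{g}-1$, and therefore $b\mid p^{k}-1$; but also $b\mid a\mid p^{k}+1$, so $b\mid 2$. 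If $p=2$ then $b$ is odd, so $b=1$ and $a=2^{g}+1=\frac{2^{2g}-1}{2^{g}-1}$. If $p$ is odd, the factor $\sum_{j=0}^{k'-1}(-1)^{j}p^{gj}$ is a sum of $k'$ (an odd number of) odd terms, hence is odd, so $v_{2}(p^{k}+1)=v_{2}(p^{g}+1)$; then $b=2$ would give $v_{2}(a)=v_{2}(p^{g}+1)+1>v_{2}(p^{k}+1)$, contradicting $a\mid p^{k}+1$. Thus $b=1$ and $a=p^{g}+1$.

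The routine part is the reduction to $p^{\gcd(2k,n)}-1$ together with the divisibility $p^{\gcd(k,n)}+1\mid a$ in the even case. The only delicate point, and the place where the characteristic genuinely enters, is excluding the extra factor of $2$ when $p$ is odd and $n/g$ is even; this is handled by the $2$-adic valuation identity $v_{2}(p^{k}+1)=v_{2}(p^{g}+1)$ coming from the alternating-sum factorization and the fact that $\gcd(p^{g}-1,p^{g}+1)\in\{1,2\}$.
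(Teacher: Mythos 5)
Your proof is correct. There is, however, nothing in the paper to compare it against: the lemma is stated there without proof (and without citation), being invoked as a standard known fact about $\gcd\left(p^{k}+1,p^{n}-1\right)$. Your argument is a clean, self-contained derivation of it: the reduction via $p^{k}+1\mid p^{2k}-1$ together with $\gcd\left(p^{s}-1,p^{t}-1\right)=p^{\gcd(s,t)}-1$ correctly splits the problem according to the parity of $n/\gcd(n,k)$ through $\gcd(2k,n)=\gcd(k,n)\gcd(2,n/\gcd(n,k))$; in the odd case the bound $a\mid\gcd\left(p^{k}+1,p^{k}-1\right)\mid 2$ and the observation that $2$ divides both arguments when $p$ is odd settle it; in the even case the divisibility $p^{\gcd(k,n)}+1\mid p^{k}+1$ is justified by the alternating factorization with $k/\gcd(k,n)$ odd, and the only delicate step—excluding the extra factor $2$ when $p$ is odd—is handled correctly by the $2$-adic valuation identity $v_{2}\left(p^{k}+1\right)=v_{2}\left(p^{\gcd(k,n)}+1\right)$. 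The $p=2$ subcases are also checked against the stated formula $\frac{2^{\gcd(2k,n)}-1}{2^{\gcd(k,n)}-1}$, which evaluates to $1$ or $2^{\gcd(k,n)}+1$ exactly as you find. So your proposal supplies a complete proof where the paper offers none.
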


The following lemma can be used in solving equations over finite fields.
\begin{lemma}\cite{FF}\label{quadraticequationsolution}
	Let $ n $ be a positive integer. We have:

	\noindent ({\rm i}) The equation $ x^{2}+ax+b=0 $, with $ a, b \in \gf_{2^{n}} $, $ a \ne 0 $, has two solutions in $ \gf_{2^{n}} $ if $ \tr_{n}(\frac{b}{a^{2}})=0 $, and zero solutions otherwise.
	
	\noindent ({\rm ii}) The equation $ x^{2}+ax+b=0 $, with $ a, b \in \gf_{p^{n}} $, $ p $  odd, has (two, respectively, one) solutions in $ \gf_{p^{n}} $ if and only if the discriminant $ a^{2}-4b $ is a (nonzero, respectively, zero) square in $ \gf_{p^{n}} $.
 \end{lemma}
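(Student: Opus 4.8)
The plan is to treat the characteristic-two and odd-characteristic cases separately, in each reducing to a normal form in which the number of solutions is immediate.

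For part $({\rm i})$, I would first normalise via the substitution $x = ay$, which is legitimate because $a \ne 0$; it turns $x^2 + ax + b = 0$ into $y^2 + y + b/a^2 = 0$. The key point is that the Artin--Schreier map $L\colon \gf_{2^n}\to\gf_{2^n}$, $L(y) = y^2 + y$, is $\gf_2$-linear with kernel $\{0,1\}$, hence exactly two-to-one onto its image, which therefore has $\gf_2$-dimension $n-1$. I would then identify $\image(L)$ with the trace-zero hyperplane $H = \{z\in\gf_{2^n} : \tr_n(z) = 0\}$: the inclusion $\image(L)\subseteq H$ follows from the Frobenius-invariance $\tr_n(y^2) = \tr_n(y)$, which gives $\tr_n(L(y)) = \tr_n(y^2) + \tr_n(y) = 0$, and equality then follows because both sets have cardinality $2^{n-1}$. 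Hence $y^2 + y = b/a^2$ has two (necessarily distinct, since $L$ is $2$-to-$1$) solutions exactly when $\tr_n(b/a^2) = 0$ and none otherwise, and undoing the substitution yields the statement.

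For part $({\rm ii})$, with $p$ odd I would complete the square, $x^2 + ax + b = \bigl(x + \frac{a}{2}\bigr)^2 - \frac{a^2 - 4b}{4}$, which is valid because $2$ is invertible in $\gf_{p^n}$. The roots of the quadratic are then in bijection with the square roots of $(a^2 - 4b)/4$, and since $4$ is a nonzero square, $(a^2 - 4b)/4$ is a nonzero square, zero, or a nonsquare precisely when $a^2 - 4b$ is. Finally, a nonzero square in $\gf_{p^n}$ has exactly two square roots (the map $y\mapsto y^2$ has kernel $\{\pm 1\}$ and $-1\ne 1$ in odd characteristic), $0$ has the unique square root $0$, and a nonsquare has none; this gives two, one, or zero solutions according to whether the discriminant $a^2 - 4b$ is a nonzero square, zero, or a nonsquare.

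I do not expect a genuine obstacle, as both parts are classical facts about finite fields. The only step needing slight care is the identification of $\image(L)$ with the trace-zero hyperplane in part $({\rm i})$: it rests on the identity $\tr_n(y^2) = \tr_n(y)$ together with a dimension count rather than on any explicit formula for the roots. An alternative for part $({\rm i})$ is to analyse solvability directly through the additive-polynomial/Artin--Schreier machinery, but the kernel-and-image argument above is the most economical.
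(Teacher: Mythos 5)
Your proof is correct. The paper itself gives no proof of this lemma --- it is quoted verbatim from Lidl--Niederreiter \cite{FF} --- and your argument (the substitution $x=ay$ followed by the Artin--Schreier kernel-and-image count identifying $\image(y\mapsto y^2+y)$ with the trace-zero hyperplane in characteristic $2$, and completing the square in odd characteristic) is precisely the standard textbook derivation, with no gaps.
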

At last, we introduce the following result on the quadratic multiplicative character sums.
\begin{lemma} \cite[Theorem 5.48]{FF}\label{charactersumquadratic} Let $f(x)=a_2x^2+a_1x+a_0\in\gf_q[x]$ with $q$ odd and $a_2\neq0$. Put $d=a^2_1-4a_0a_2$ and let $\chi$ be the quadratic character of $\gf_q$. Then
	\begin{eqnarray*}
		\sum_{x\in\gf_q}\chi(f(x))=\left\{
		\begin{array}{lllll}
			-\chi(a_2), ~~~~~~~\mathrm{if}~d\neq0.\\
			(q-1)\chi(a_2), ~\mathrm{if}~d=0.
		\end{array} \right.\ \
	\end{eqnarray*}
 \end{lemma}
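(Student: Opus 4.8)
The statement is the classical evaluation of a quadratic character sum (it is Theorem~5.48 of \cite{FF}), so there is nothing to prove beyond recalling the standard argument, which I would organize as a reduction to a normalized form by completing the square followed by elementary manipulations of $\chi$. Since $q$ is odd, $2a_2$ is invertible, and one can write $f(x)=a_2\bigl(x+\frac{a_1}{2a_2}\bigr)^2-\frac{d}{4a_2}$. Substituting $y=x+\frac{a_1}{2a_2}$, which permutes $\gf_q$, gives
\[\sum_{x\in\gf_q}\chi(f(x))=\sum_{y\in\gf_q}\chi\!\left(a_2y^2-\frac{d}{4a_2}\right).\]
If $d=0$ this equals $\chi(a_2)\sum_{y\in\gf_q}\chi(y^2)=\chi(a_2)(q-1)$, because $\chi(y^2)=1$ for $y\ne 0$ and $\chi(0)=0$; this disposes of the second case.

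Assume now $d\ne 0$ and put $e=-\frac{d}{4a_2}\ne 0$. For each $z\in\gf_q$ the equation $a_2y^2=z$ has exactly $1+\chi(a_2z)$ solutions $y\in\gf_q$ (check the cases $z=0$, $z$ a nonzero square, and $z$ a nonzero nonsquare, e.g.\ via Lemma~\ref{quadraticequationsolution}(ii)). Hence, writing $\Sigma=\sum_{z\in\gf_q}\chi(z)\chi(z+e)$,
\[\sum_{y\in\gf_q}\chi(a_2y^2+e)=\sum_{z\in\gf_q}\bigl(1+\chi(a_2z)\bigr)\chi(z+e)=\Big(\sum_{z\in\gf_q}\chi(z+e)\Big)+\chi(a_2)\,\Sigma=\chi(a_2)\,\Sigma,\]
since $\sum_{w\in\gf_q}\chi(w)=0$ (the substitution $w=z+e$ being a bijection). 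So everything reduces to showing $\Sigma=-1$.

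To evaluate $\Sigma$, note the terms with $z\in\{0,-e\}$ vanish, while for $z\notin\{0,-e\}$ one has $\chi(z)\chi(z+e)=\chi\bigl(z^2(1+e/z)\bigr)=\chi(1+e/z)$. The map $z\mapsto u=1+e/z$ is a bijection from $\gf_q\setminus\{0,-e\}$ onto $\gf_q\setminus\{0,1\}$, with inverse $u\mapsto e/(u-1)$, so $\Sigma=\sum_{u\in\gf_q\setminus\{0,1\}}\chi(u)=-\chi(1)=-1$, once more using $\sum_{u\in\gf_q}\chi(u)=0$. Substituting back gives $\sum_{x\in\gf_q}\chi(f(x))=-\chi(a_2)$ when $d\ne 0$, which is the first case. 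The only point requiring a little care is verifying that $z\mapsto 1+e/z$ really is a bijection between the stated sets; everything else is bookkeeping with the character values. (Alternatively, $\Sigma=-1$ can be read off from the standard value of a Jacobi sum, but the substitution above keeps the argument self-contained.)
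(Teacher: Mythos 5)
Your proof is correct, and the paper itself gives no proof of this lemma --- it simply cites Theorem~5.48 of Lidl--Niederreiter. Your argument (completing the square, counting solutions of $a_2y^2=z$ via $1+\chi(a_2z)$, and evaluating $\sum_z\chi(z)\chi(z+e)=-1$ by the substitution $u=1+e/z$) is precisely the standard textbook derivation, so there is nothing to add.
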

\section{The properties of the $c$-differential spectrum}
The usual differential spectrum of a power function  satisfies several identities (see  \cite{BCC}). It is natural to wonder how it behaves with respect to the $c$-differential spectrum of a power function. In this section, we give the following identities of the $c$-differential spectrum.

\begin{theorem}\label{properties1}Let $ F\left( x \right) = {x^d} $ be a power function over $ \gf_{q} $ with $c$-differential uniformity $_c{\Delta _F}$ for some $1\neq c\in\gf_{q}$. Recall that $ {{}_c\omega _i} = \# \left\{ {b \in \gf_{q} : {{}_c{\Delta _F}\left(1, b \right) = i}.} \right\} $ for each $ 0 \le i \le {}_c{\Delta _F} $. We have 
	\begin{equation}\label{omegaiomega}
		\sum_{i=0}^{{}_c{\Delta _F}}{}_c\omega_i=\sum_{i=0}^{{}_c{\Delta _F}}i\cdot{}_c\omega_i=q.
	\end{equation} 	
	Moreover, we have 
	\begin{equation}\label{i^2omega_ieqution}
		\sum\limits_{i = 0}^{{}_c{\Delta _F}} {{i^2}\cdot {}_c{\omega _i}}  = \frac{{{\rm{ }}{{}_cN_4} - 1}}{{q} - 1}-\gcd(d,q-1) ,
	\end{equation}
	where
  \begin{equation}\label{equationsystem}
  	{}_cN_{4}=\# \left\{ {\left( {{x_1},{x_2},{x_3},{x_4}} \right) \in (\gf_{q})^4 : {\Bigg\{ \begin{array}{l}
  				{x_1} - {x_2} + {x_3} - {x_4} = 0\\
  				x_1^d - cx_2^d + cx_3^d - x_4^d = 0
  		\end{array} } } \right\}. 
  \end{equation} 
\end{theorem}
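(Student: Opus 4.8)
The plan is to mimic the classical argument of Blondeau--Canteaut--Charpin for the ordinary differential spectrum, adapting it to the multiplicative ($c$-differential) setting. The starting observation is that the numbers ${}_c\omega_i$ count the fibers of the map $b\mapsto {}_c\delta_F(1,b)$, so the first two identities in \eqref{omegaiomega} should follow from a double-counting: $\sum_i {}_c\omega_i=q$ merely says that $b$ ranges over $\gf_q$, while $\sum_i i\cdot {}_c\omega_i$ counts pairs $(x,b)$ with $\Delta_c(x)=b$, which is exactly $|\gf_q|=q$ since for each $x$ there is a unique such $b=\Delta_c(x)=(x+1)^d-cx^d$. That disposes of the easy half.

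For \eqref{i^2omega_ieqution} the plan is to compute $\sum_i i^2\cdot {}_c\omega_i$ by interpreting $i^2$ combinatorially. Writing ${}_c\delta_F(1,b)=\#\{x:\Delta_c(x)=b\}$, we have
\begin{equation*}
\sum_{b\in\gf_q}\big({}_c\delta_F(1,b)\big)^2=\#\{(x,y)\in(\gf_q)^2:\Delta_c(x)=\Delta_c(y)\},
\end{equation*}
and the left side equals $\sum_i i^2\cdot {}_c\omega_i$. So the task reduces to counting solutions $(x,y)$ of $(x+1)^d-cx^d=(y+1)^d-cy^d$, equivalently of $(x+1)^d-c x^d-c(-y)^d\cdot(-1)^{?}\dots$ — one must be careful with signs. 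The clean route is to homogenize: a solution $(x,y)$ corresponds to choosing $a\in\gf_q^*$ and using $_c\delta_F(a,b)=\,_c\delta_F(1,b/a^d)$ to pass between the normalized count and the system \eqref{equationsystem}. Concretely, one relates $\#\{(x,y):\Delta_c(x)=\Delta_c(y)\}$ to a weighted count over all $a$ of $\sum_b {}_c\delta_F(a,b)^2$, and that weighted quantity is precisely ${}_cN_4$ up to the contribution of the ``trivial'' solutions and a normalization by $q-1$. The role of the term $\gcd(d,q-1)$ is to account for the degenerate solutions of the first equation $x_1-x_2+x_3-x_4=0$ paired with the $d$-th power equation: when $x_1=x_2$ and $x_3=x_4$ (or the analogous pairing), the second equation in \eqref{equationsystem} becomes $x_1^d-cx_1^d+cx_3^d-x_3^d=(1-c)(x_1^d-x_3^d)=0$, which since $c\neq1$ forces $x_1^d=x_3^d$, and the number of such $(x_1,x_3)$ is governed by $\gcd(d,q-1)$ (each nonzero $d$-th power has exactly $\gcd(d,q-1)$ preimages, plus the zero). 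Subtracting these and dividing by $q-1$ yields the stated formula.

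The main obstacle will be the careful bookkeeping in converting ${}_cN_4$ — which is phrased with the ``twisted'' coefficient pattern $(1,-c,c,-1)$ matched against $(1,-1,1,-1)$ — into the symmetric square-sum $\sum_b {}_c\delta_F(1,b)^2$. One has to substitute $x_1=x$, $x_4=y$, eliminate $x_2,x_3$ using the linear relation, and check that the surviving system says exactly $\Delta_c(x)/a^d$ and $\Delta_c(y)/a^d$ agree for the appropriate scaling parameter; the asymmetry between the coefficient $c$ appearing on $x_2^d$ versus $x_3^d$ is what makes this matching nontrivial and is precisely engineered so that the count is independent of $c$ in the right places. A secondary technical point is handling the $x=0$ and $x=-1$ boundary cases separately (these are where $x^d$ or $(x+1)^d$ vanishes and the substitution $b\mapsto b/a^d$ is not invertible), but these contribute only lower-order corrections already absorbed into the constants $-1$ and $-\gcd(d,q-1)$.
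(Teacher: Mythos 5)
Your strategy is correct and coincides with the paper's own proof: the first two identities follow by the same double counting, and the third by stratifying $ {}_cN_4 $ according to the common difference $\alpha=x_1-x_2=x_4-x_3$, where the $\alpha=0$ stratum contributes exactly $1+(q-1)\gcd(d,q-1)$ (your pairs $x_1=x_2$, $x_3=x_4$ with $x_1^d=x_3^d$) and each stratum with $\alpha\neq 0$ contributes $\sum_{b}\left({\delta_c}(b)\right)^2=\sum_i i^2\cdot{}_c\omega_i$ via the homogeneity $ {}_c\delta_F(a,b)={}_c\delta_F(1,b/a^d) $. Your closing remark about $x=0$ and $x=-1$ needing separate treatment is spurious --- the only degenerate contribution is the $\alpha=0$ stratum you already identified --- but this does not affect the correctness of the argument.
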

	\begin{proof}
		According to the definition of $ {}_c\omega_{i} $, we have

	\[\sum\limits_{i = 0}^{{}_c{\Delta _F}} {{{}_c\omega _i}} = \sum\limits_{i = 0}^{{}_c{\Delta _F}} {\# \left\{ {b \in \gf_{q} : {{}_c{\Delta _F}\left(1, b \right) = i} } \right\}}
	= \sum\limits_{b \in \gf_{q}} 1  = q.\]
and 
\begin{align*}
	\sum\limits_{i = 0}^{{}_c{\Delta _F}} {i\cdot{{}_c\omega _i}}  &=\sum\limits_{i = 0}^{{}_c{\Delta _F}} {i \cdot\# \left\{ {b \in \gf_{q} : {{}_c{\Delta _F}(1, b ) = i}} \right\}}\\
	&=\sum_{b\in\gf_{q}}\#\{x\in\gf_{q}:(x+1)^d-cx^d=b\}\\	
	&= \sum\limits_{x \in \gf_q } 1 \\
	&= q.
\end{align*}
Next we prove the last statement. For $c\neq1$, we define \[{}_cn({\alpha,\beta }) = \# \Bigg\{ {({{x},{y}}) \in (\gf_{q})^2: {\bigg\{ \begin{array}{l}
					{x} - {y} = \alpha \\
					x^d - cy^d = \beta 
			\end{array} }} \Bigg\}. \]
		It is obvious that
		$$ {{}_cN_4} = \sum\limits_{\alpha ,\beta  \in {\gf_{q}}} {{(_cn}( {\alpha ,\beta }))^2}. $$
		Moreover, one can easily check that $ {}_cn(0,0)=1 $. For $\beta\neq 0$, $_cn(0,\beta)$ is equal to the number of solutions $y\in\gf_{q}$ of the equation $y^d=\frac{\beta}{1-c}$. Let  $ \gamma $ be a primitive element of $ \gf_{q}^*$ and $\frac{\beta}{1-c}=\gamma^{k}$ for some integer $k$. Let $e=\gcd(d,q-1)$. Then $_cn(0,\beta)=e$ if $e\mid k$ and $_cn(0,\beta)=0$ otherwise. For $ \alpha,\beta \ne 0 $, we have  $ {}_cn(\alpha,\beta) = {}_cn(1,\frac{\beta }{\alpha ^d}) $. We immediately obtain the following.
		\[\begin{aligned}{}
			{{}_cN_4} &=\sum\limits_{\alpha ,\beta  \in {\gf_{q}}} {{(_cn}( {\alpha ,\beta } ))^2}\\
			&= {({}_cn}({0,0}))^2 +  \sum\limits_{\beta  \in \gf_{q}^*} ({{}_cn}({0,\beta }))^2 + \sum\limits_{\alpha  \in \gf_{q}^*} { {\sum\limits_{\beta  \in {\gf_{q}}} {({{}_cn}\left( {\alpha, \beta} \right))^2} }} \\
			&= 1 +  \sum_{1\leq k \leq q-1, e\mid k} e^2  + \sum\limits_{\alpha  \in \gf^*_{q}} { {\sum\limits_{\beta  \in {\gf_{q}}} {({{}_cn}( {1,\frac{\beta }{{{\alpha ^d}}}}))^2} }} \\
			&= 1 + \frac{{q}-1}{e}{{e}^2} + \sum\limits_{\alpha  \in \gf_{q}^*} { {\sum\limits_{b \in {\gf_{q}}} {({{}_cn}( {1,b}))^2} }} \\
			&= 1 + \big(q-1 \big){e} + \sum\limits_{\alpha  \in \gf_{q}^*} {{\sum\limits_{b \in {\gf_{q}}} {{{\big( {{\delta _c}(b)} \big)}^2}} }} \\
			&= 1 + \big(q-1 \big){{ {e} }} + (q - 1)\sum\limits_{i = 0}^{{}_c{\Delta _F}} {{i^2}\cdot{{}_c\omega _i}}. 
		\end{aligned}\]
		The fourth identity holds since for fixed $a\neq 0$, when $\beta$ runs through $\gf_{q}$, so does $\frac{\beta}{\alpha^d}$. We complete the proof.
	\end{proof}
	
\section{The $ c $-differential spectrum of some power functions}
In this section, we investigate the $c$-differential spectra of some power functions with low $c$-differential uniformity.
\subsection{The $c$-differential spectrum of the inverse function}
Since there has been quite a bit of effort to investigate the inverse function over $ \gf_{2^{n}} $ as it is relevant in Rijndael and Advance Encryption Standard, it is natural to wonder how it behaves with respect to the $ c $-differential uniformity. In \cite{2020CEP}, the $ c $-differential uniformity of the inverse function $ x\in \gf_{p^{n}},x\mapsto x^{p^{n}-2}  $ has been studied. We have the following lemmas.
\begin{lemma}[\cite{2020CEP}] 
	Let $ n $ be a positive integer, $ 1\ne c \in \gf_{2^{n}} $ and $ F:{\gf_{{2^n}}} \to {\gf_{{2^n}}} $ be the inverse function defined by $ F(x)=x^{2^{n}-2} $. We have:
	
	\noindent ({\rm i}) If $ c=0 $, then $ F $ is  P$ c $N ;
	
	\noindent ({\rm ii}) If $ c\ne 0 $ and $ \tr_n(c)=\tr_n(\frac{1}{c})=1 $, the $ c $-differential uniformity of $ F $ is $ 2 $;
	
	\noindent ({\rm iii}) If $ c\ne 0 $ and $ \tr_n(c)=0$ or $\tr_n(\frac{1}{c})=0 $, the $ c $-differential uniformity of $ F $ is $ 3 $.
	
\end{lemma}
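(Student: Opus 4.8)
The plan is to prove the three cases of the $c$-differential uniformity of the inverse function $F(x)=x^{2^n-2}$ by analyzing the equation $\Delta_c(x)=(x+1)^d-cx^d=b$ directly, splitting according to whether the variables involved are zero. Since $d=2^n-2$, note that $x^d=x^{-1}$ for $x\neq0$ and $0^d=0$. So for a fixed $b\in\gf_{2^n}$ I would count solutions $x$ of $(x+1)^{2^n-2}-c\,x^{2^n-2}=b$ by considering the cases $x=0$, $x=1$ (recall we are in characteristic $2$, so $x+1=0$ iff $x=1$), and $x\notin\{0,1\}$ separately. In the last, generic case the equation becomes $\frac{1}{x+1}-\frac{c}{x}=b$, i.e. $\frac{x-c(x+1)}{x(x+1)}=b$, which after clearing denominators is the quadratic $bx^2+(b+1+c)x+c=0$ when $b\neq0$, and a linear equation when $b=0$. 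For $c=0$ part (i) is then immediate: the equation collapses to $(x+1)^{-1}=b$ on $x\neq1$ plus the contribution of $x=1$, and one checks each $b$ is hit exactly once, so $_c\Delta_F=1$.

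For parts (ii) and (iii), with $c\neq0$, the core of the argument is Lemma~\ref{quadraticequationsolution}(i): the quadratic $bx^2+(b+1+c)x+c=0$ has $0$ or $2$ roots in $\gf_{2^n}$ according to whether $\tr_n\!\big(\tfrac{bc}{(b+1+c)^2}\big)=0$. I would then need to be careful to subtract off spurious roots: a root $x=0$ occurs iff $c=0$ (excluded), and a root $x=1$ occurs iff $b\cdot1+(b+1+c)+c=0$, i.e. never in characteristic $2$ unless — let me recompute — $b+b+1+c+c=1\neq0$, so $x=1$ is never a root of the cleared quadratic; that is clean. The contributions of the special values $x=0$ and $x=1$ to the original equation add at most $2$ to the count for the two particular values $b=c$ (from $x=0$, giving $(0+1)^{-1}-0=1$, wait, that gives $b=1$) and $b=1$ (from $x=1$, giving $0-c\cdot1=c$, so $b=c$); so exactly two values of $b$, namely $b=1$ and $b=c$, receive one extra solution each from the degenerate $x$'s, and the value $b=0$ is handled by the linear equation. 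Tallying these, I would show that $_c\delta_F(1,b)\le 2$ for all $b$ precisely when the quadratic never has its two roots "stack on top of" an inherited solution, which is governed by whether $\tr_n(c)=\tr_n(c^{-1})=1$; when instead $\tr_n(c)=0$ or $\tr_n(c^{-1})=0$ one exhibits a $b$ (essentially $b=1$ or $b=c$) at which the quadratic has two roots \emph{and} a degenerate solution is inherited, pushing the count to $3$, and one checks $3$ is never exceeded.

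The main obstacle I anticipate is the bookkeeping in matching the trace conditions $\tr_n(c)=1$ and $\tr_n(c^{-1})=1$ to the solvability of the relevant quadratics at the two distinguished values $b=1$ and $b=c$. Concretely, at $b=1$ the quadratic is $x^2+c\,x+c=0$, solvable iff $\tr_n(c^{-1})=\tr_n\!\big(\tfrac{c}{c^2}\big)=0$; at $b=c$ the quadratic is $c\,x^2+x+c=0$, i.e. $x^2+c^{-1}x+1=0$, solvable iff $\tr_n\!\big(\tfrac{1}{c^{-2}}\big)=\tr_n(c^2)=\tr_n(c)=0$. So the count at $b=1$ is $1+2=3$ when $\tr_n(c^{-1})=0$ and $1+0=1$ when $\tr_n(c^{-1})=1$, and symmetrically at $b=c$ with $\tr_n(c)$; for all other $b\neq0$ the count is $0$ or $2$, and $b=0$ contributes $1$. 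This yields $_c\Delta_F=2$ exactly when $\tr_n(c)=\tr_n(c^{-1})=1$ and $_c\Delta_F=3$ otherwise, which is the claim; the remaining effort is just verifying that no further coincidences (e.g. a repeated root of the quadratic colliding with an inherited solution) can produce a count larger than $3$, which follows since a quadratic has at most $2$ roots and at most one inherited solution can occur for a given $b$.
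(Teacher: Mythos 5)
Your proposal is correct, and it follows essentially the same route as the paper: although the lemma itself is only cited from the literature, the paper's proof of Theorem~\ref{Inversep=2APcN} carries out exactly your case analysis ($b=0$, $b=1$, $b=c$, and generic $b$, clearing denominators to the quadratic $bx^2+(1+b+c)x+c=0$ and applying Lemma~\ref{quadraticequationsolution}(i)), with the same trace conditions $\tr_n(1/c)$ at $b=1$ and $\tr_n(c)$ at $b=c$ governing whether the count reaches $3$. The only cosmetic slip is that the value $b=1+c$ yields exactly one solution rather than ``$0$ or $2$'', but this does not affect the maximum and hence not the stated uniformity.
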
		

\begin{lemma}[\cite{2020CEP}]
	Let $p$ be an odd prime. $ n\ge 1 $ be a positive integer, $ 1\ne c \in \gf_{p^{n}} $ and $ F:{\gf_{{p^n}}} \to {\gf_{{p^n}}} $ be the inverse function defined by $ F(x)=x^{p^{n}-2} $. We have:
	
	\noindent ({\rm i}) If $ c=0 $, then $ F $ is P$ c $N .
	
	\noindent ({\rm ii}) If $ c\neq 0, 4, 4^{-1}$, $\chi(c^2-4c)=1$, or $ \chi(1-4c)=1 $, the $ c $-differential uniformity of $ F $ is $ 3 $.
	
	\noindent ({\rm iii}) If $ c = 4, 4^{-1}$, the $ c $-differential uniformity of $ F $ is $ 2 $.
	
	\noindent ({\rm iv}) If $ c \neq 0 $, $\chi(c^{2}-4c)=\chi(1-4c)=-1$, the $ c $-differential uniformity of $ F $ is $ 2 $.
	
\end{lemma}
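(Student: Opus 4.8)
The plan is to compute, for every $b\in\gf_{p^n}$, the quantity
$\delta_c(b)={}_c\delta_F(1,b)=\#\{x\in\gf_{p^n}:(x+1)^{p^n-2}-cx^{p^n-2}=b\}$;
since $F$ is a power function this pins down ${}_c\Delta_F=\max_b\delta_c(b)$ and hence all four assertions. The whole analysis rests on the elementary identity $y^{p^n-2}=y^{-1}$ for $y\in\gf_{p^n}^*$ together with $0^{p^n-2}=0$, so that $\Delta_c$ agrees with $x\mapsto\frac{1}{x+1}-\frac{c}{x}$ away from the two exceptional points $x=0$ and $x=-1$. Part (i) is then immediate: if $c=0$ then $\Delta_0(x)=(x+1)^{p^n-2}$ is a permutation of $\gf_{p^n}$, so $\delta_0(b)=1$ for every $b$ and $F$ is P$c$N.

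For $c\neq0$ I would count the solutions of $\Delta_c(x)=b$ by first isolating the two exceptional points. Direct substitution gives $\Delta_c(0)=1$ and, using that $p^n-2$ is odd, $\Delta_c(-1)=-c\,(-1)^{p^n-2}=c$; since $c\neq0,1$ the values $1$ and $c$ are distinct and nonzero, so $x=0$ adds $1$ to $\delta_c(1)$, $x=-1$ adds $1$ to $\delta_c(c)$, and these points affect no other $\delta_c(b)$. For $x\in\gf_{p^n}^\#$ the equation $\frac{1}{x+1}-\frac{c}{x}=b$ clears denominators to
\[
bx^2+(b+c-1)x+c=0.
\]
When $b=0$ this is the linear equation $(c-1)x+c=0$, whose single root $\frac{c}{1-c}$ one checks lies in $\gf_{p^n}^\#$, so $\delta_c(0)=1$. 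When $b\neq0$ it is a genuine quadratic in which neither $x=0$ (because $c\neq0$) nor $x=-1$ (where the left-hand side equals $1$) can be a root; hence by Lemma~\ref{quadraticequationsolution} its roots, all automatically lying in $\gf_{p^n}^\#$, number exactly $1+\chi(D(b))$, where $D(b):=(b+c-1)^2-4bc=b^2-2(c+1)b+(c-1)^2$.

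Assembling the contributions, $\delta_c(0)=1$ and, for $b\neq0$,
\[
\delta_c(b)=1+\chi(D(b))+\varepsilon_1(b)+\varepsilon_c(b),
\]
where $\varepsilon_1(b)=1$ iff $b=1$, $\varepsilon_c(b)=1$ iff $b=c$, and both vanish otherwise. Thus $\delta_c(b)\le2$ for every $b\notin\{1,c\}$, whereas $D(1)=c^2-4c$ and $D(c)=1-4c$ give $\delta_c(1)=2+\chi(c^2-4c)$ and $\delta_c(c)=2+\chi(1-4c)$. Since $c^2-4c=0$ exactly when $c=4$ and $1-4c=0$ exactly when $c=4^{-1}$, reading off $\max_b\delta_c(b)$ now yields parts (ii)--(iv): when $c\notin\{0,4,4^{-1}\}$ both characters lie in $\{\pm1\}$, so ${}_c\Delta_F=3$ precisely if one of $\chi(c^2-4c),\chi(1-4c)$ equals $1$ (part (ii)) and ${}_c\Delta_F=2$ if both equal $-1$ (part (iv)); and when $c=4$ or $c=4^{-1}$ the associated distinguished point contributes exactly $2$, which gives ${}_c\Delta_F=2$ (part (iii)).

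The step I expect to be the real obstacle is the careful bookkeeping at $x=0$ and $x=-1$: one must be sure these points are neither silently absorbed into nor silently removed from the count of roots of the quadratic, that they are not double-counted against each other or against the $b=0$ root, and that the two sporadic $+1$'s land on exactly $b=1$ and $b=c$. A second point requiring a genuine argument rather than inspection is that the stated maximum is attained — for the value $2$ one must exhibit some $b\notin\{0,1,c\}$ with $\chi(D(b))=1$, which follows from Lemma~\ref{charactersumquadratic} applied to the quadratic $D$ (whose discriminant in $b$ equals $16c\neq0$) together with a direct check of the handful of tiny fields; the two degenerate values $c=4,4^{-1}$, at which one of $c^2-4c$, $1-4c$ collapses, also merit a separate verification.
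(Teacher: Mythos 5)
Your counting machinery is sound and essentially coincides with what the paper itself does in the proof of Theorem \ref{thm-oddAPcN1} (the paper does not prove this lemma; it is quoted from \cite{2020CEP}): $\delta_c(0)=1$; the exceptional points $x=0$ and $x=-1$ contribute only to $b=1$ and $b=c$; and for $x\in\gf_{p^n}^{\#}$ the equation clears to $bx^{2}+(b+c-1)x+c=0$, giving $\delta_c(1)=2+\chi(c^{2}-4c)$, $\delta_c(c)=2+\chi(1-4c)$, and $\delta_c(b)\le 2$ for $b\notin\{0,1,c\}$. This correctly settles (i), (ii) and (iv); for (iv) your appeal to Lemma \ref{charactersumquadratic} is the right move, but note that $b=0$ gives the always-square value $(c-1)^{2}$ of the discriminant $(b+c-1)^{2}-4bc$, so you need at least two $b$'s with character $+1$, which $\sum_{b}\chi\bigl((b+c-1)^{2}-4bc\bigr)=-1$ supplies only for $p^{n}\ge 7$ (and for $p^{n}=3$ the inverse map is $x\mapsto x$, which is P$c$N, so the statement itself degenerates there — your "check of tiny fields" would actually turn this up).

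The genuine gap is part (iii). Your justification, "the associated distinguished point contributes exactly $2$," controls only one of the two distinguished values. Take $c=4$: then $c^{2}-4c=0$ indeed forces $\delta_c(1)=2$, but your own formula gives $\delta_c(c)=2+\chi(1-4c)=2+\chi(-15)$, which equals $3$ whenever $-15$ is a square. This does happen: over $\gf_{17}$ with $c=4$, the equation $(x+1)^{15}-4x^{15}=4$ has the three solutions $x=-1,\,2,\,9$, so ${}_{4}\Delta_F\ge 3$ there. Hence (iii) as stated cannot be derived from your (correct) formulas; what they actually show is that for $c=4$ the uniformity is $2$ exactly when $\chi(1-4c)=-1$ and $3$ when $\chi(1-4c)=1$, and symmetrically for $c=4^{-1}$ with $\chi(c^{2}-4c)$. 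You flagged $c=4,4^{-1}$ as "meriting separate verification" but then asserted the conclusion instead of performing that verification, which is precisely where the argument fails (and this is consistent with the paper's own Theorem \ref{thm-oddAPcN1} excluding $c=4,4^{-1}$ from its analysis).
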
	

The $ c $-differential spectrum of the inverse function is given in the following theorem. 	
	
\begin{theorem}\label{Inversep=2APcN}
	Let $ F(x)=x^{2^{n}-2} $ be a power function over $ \gf_{{2^n}}$. When $ 0,1\ne c \in \gf_{2^{n}}$, the inverse function is AP$c$N with  $ c $-differential spectrum 
	$$
	\mathbb S=\{{}_c\omega_0=2^{n-1}-2,{}_c\omega_1=4,{}_c\omega_2=2^{n-1}-2\}
	$$ if  $ \tr_n(c)=\tr_n(\frac{1}{c})=1 $.
	Moreover, the inverse function is differentially $ (c,3) $-uniform with  $ c $-differential spectrum
	$$
	\mathbb S=\{{}_c\omega_0=2^{n-1}-1,{}_c\omega_1=3,{}_c\omega_2=2^{n-1}-3,{}_c\omega_3=1\}
	$$
	if $\tr_n(c)=1$, $\tr_n(\frac{1}{c})=0 $ or $\tr_n(c)=0$, $\tr_n(\frac{1}{c})=1$, and is differentially $ (c,3) $-uniform with $ c $-differential spectrum
	$$
	\mathbb S=\{{}_c\omega_0=2^{n-1},{}_c\omega_1=2,{}_c\omega_2=2^{n-1}-4,{}_c\omega_3=2\}
	$$
	if $\tr_n(c)=\tr_n(\frac{1}{c})=0$.
\end{theorem}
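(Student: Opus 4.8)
The plan is to study the map $\Delta_c(x)=(x+1)^{2^n-2}+c\,x^{2^n-2}$ directly, using that over $\gf_{2^n}$ one has $x^{2^n-2}=x^{-1}$ for $x\ne 0$ and $0^{2^n-2}=0$, and splitting the argument into the three ranges $x=0$, $x=1$, $x\in\gf_{2^n}\setminus\{0,1\}$. This gives $\Delta_c(0)=1$, $\Delta_c(1)=c$, and, for $x\ne 0,1$,
\[
\Delta_c(x)=\frac{1}{x+1}+\frac{c}{x}=\frac{(1+c)x+c}{x^{2}+x}.
\]
Hence for a fixed $b\in\gf_{2^n}$ the number $\delta_c(b)={}_c\delta_F(1,b)$ is the number of roots lying in $\gf_{2^n}\setminus\{0,1\}$ of $bx^{2}+(b+1+c)x+c=0$, plus $1$ when $b=1$ (the contribution of $x=0$) and plus $1$ when $b=c$ (the contribution of $x=1$).

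Next I would handle the exceptional values of $b$. Since $c\notin\{0,1\}$, the four elements $0,\,1,\,c,\,1+c$ are pairwise distinct. For $b=0$ the equation is linear with the single root $x=c/(1+c)$, which lies in $\gf_{2^n}\setminus\{0,1\}$, so $\delta_c(0)=1$; for $b=1+c$ (nonzero) it reads $(1+c)x^{2}=c$, which has one root because squaring is a bijection of $\gf_{2^n}$, again avoiding $\{0,1\}$, so $\delta_c(1+c)=1$. For every remaining $b$ the equation is a genuine quadratic with nonzero linear coefficient $b+1+c$, so Lemma \ref{quadraticequationsolution}(i) gives it two roots when $\tr_n\!\bigl(\frac{bc}{(b+1+c)^{2}}\bigr)=0$ and none otherwise, and substituting $x=0$ or $x=1$ into it forces $c=0$ or $1=0$, so those two roots always avoid $\{0,1\}$. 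Evaluating the trace condition at $b=1$ turns it into $\tr_n(1/c)=0$, and at $b=c$ into $\tr_n(c)=0$. Putting this together: $\delta_c(b)\in\{0,2\}$ for $b\notin\{0,1,c,1+c\}$; $\delta_c(0)=\delta_c(1+c)=1$; $\delta_c(1)=3$ if $\tr_n(1/c)=0$ and $\delta_c(1)=1$ if $\tr_n(1/c)=1$; and symmetrically $\delta_c(c)=3$ if $\tr_n(c)=0$ and $\delta_c(c)=1$ if $\tr_n(c)=1$. In particular $\delta_c(b)\le 3$ always, with the value $3$ possible only at $b\in\{1,c\}$.

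The spectrum is then forced by the first two identities of Theorem \ref{properties1}, namely $\sum_{i}{}_c\omega_i=\sum_{i}i\cdot{}_c\omega_i=2^n$. Every $b$ with $\delta_c(b)\in\{1,3\}$ lies in $\{0,1,c,1+c\}$, so ${}_c\omega_1$ and ${}_c\omega_3$ can be read off: if $\tr_n(c)=\tr_n(1/c)=1$ then ${}_c\omega_1=4$ and ${}_c\omega_3=0$; if exactly one of the two traces vanishes then ${}_c\omega_1=3$, ${}_c\omega_3=1$ (the cases $(\tr_n(c),\tr_n(1/c))=(1,0)$ and $(0,1)$ being interchanged by swapping the roles of $b=1$ and $b=c$); and if both vanish then ${}_c\omega_1={}_c\omega_3=2$. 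Feeding each pair $({}_c\omega_1,{}_c\omega_3)$ into the two identities leaves a $2\times 2$ linear system for ${}_c\omega_0$ and ${}_c\omega_2$, and solving it yields exactly the three claimed spectra; along the way it confirms ${}_c\Delta_F=2$ in the first regime and ${}_c\Delta_F=3$ in the other two.

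There is no deep step here; the work is entirely bookkeeping, and the points that need care are exactly (i) that $0,1,c,1+c$ are pairwise distinct, so their $\delta_c$-values do not get lumped together, and (ii) that the roots of $bx^{2}+(b+1+c)x+c=0$ can never ``leak'' back to $0$ or $1$, so the preimage counts over $\gf_{2^n}\setminus\{0,1\}$ are clean; the symmetry noted in (i) for the mixed-trace case is what lets the two sub-cases be treated at once. (Implicitly one takes $n\ge 3$, since for $n=2$ the map $x\mapsto x^{2}=x^{2^n-2}$ is $\gf_2$-linear and is trivially P$c$N.) This refines the $c$-differential uniformity computation of \cite{2020CEP}, which already separates out the same three regimes.
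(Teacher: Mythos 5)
Your proposal is correct and follows essentially the same route as the paper: reduce $\Delta_c(x)=b$ for $x\ne 0,1$ to the quadratic $bx^2+(b+1+c)x+c=0$, treat the exceptional values $b\in\{0,1,c,1+c\}$ separately via Lemma \ref{quadraticequationsolution}(i) to pin down ${}_c\omega_1$ and ${}_c\omega_3$, and recover ${}_c\omega_0,{}_c\omega_2$ from the identities of Theorem \ref{properties1}. Your explicit checks that $0,1,c,1+c$ are pairwise distinct and that the quadratic's roots cannot equal $0$ or $1$ are points the paper handles only implicitly, and are a welcome tightening.
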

\begin{proof}
	For $ b \in \gf_{{2^n}} $, we consider the $c$-differential equation
	\begin{equation}\label{2-c}
		(x+1)^{2^n-2} + cx^{2^n-2}=b.
	\end{equation}
	In order to calculate the $ c $-differential spectrum, we first determine the values of ${}_c\omega_1$ and ${}_c\omega_3$. 
	
	{\it Case 1:} $b=0$. (\ref{2-c}) has a unique solution $x=\frac{c}{c+1}$. 
	
	{\it Case 2:} $ b=1 $. Then $ x=0 $ is a solution of (\ref{2-c}), and $x=1$ is not a solution of (\ref{2-c}) since $c\neq 1$. Assume  that $ x\neq0,1 $, by multiplying $ x(x+1) $ on both sides of (\ref{2-c}), we get $ x^{2}+cx+c=0 $. By Lemma \ref{quadraticequationsolution} (i), this equation has two solutions iff $ \tr_{n}(\frac{1}{c})=0 $. Consequently, $\delta_c(1)=3$ if  $ \tr_{n}(\frac{1}{c})=0 $, and $\delta_c(1)=1$ if  $ \tr_{n}(\frac{1}{c})=1 $.
	
	{\it Case 3:} $ b=c$. In this case, $ x=1 $ is a solution of (\ref{2-c}), while $ x=0 $ is not a solution. Next we assume that $ x\neq0,1 $. Similarly, we get $ x^{2}+c^{-1}x+1=0 $. By Lemma \ref{quadraticequationsolution} (i), this equation has two solutions iff $ \tr_{n}(c)=0 $. Consequently, $\delta_c(c)=3$ if  $ \tr_{n}(c)=0 $, and $\delta_c(c)=1$ if  $ \tr_{n}(c)=1 $.
	
	{\it Case 4:} $ b\neq 0,1,c $. In this case, $ x\neq 0, 1 $. Multiplying  $ x(x+1) $ on both sides of (\ref{2-c}), we obtain 	
	\begin{equation}\label{2-main}
		bx^2+(1+b+c)x+c=0.
	\end{equation}
	When $b=1+c$, (\ref{2-main}) has unique solution $x=(\frac{c}{c+1})^{2^{n-1}}$.
	By Lemma \ref{quadraticequationsolution} (i), (\ref{2-main}) has 0 or 2 solutions if $1+b+c \neq 0$. We have $\delta_c(1+c)=1$, and $\delta_c(b)=0$ or $2$ when $b\neq 0,1, c, 1+c $.

	Summarizing the above, we obtain that 
	\[{}_c\omega_1 = \left\{ \begin{aligned}{}
		4,~~~&\mathrm{if}~\tr_n(c)=\tr_n(\frac{1}{c})=1 ,\\
		3,~~~&\mathrm{if}~\tr_n(c)=1, \tr_n(\frac{1}{c})=0~ \mathrm{or}~\tr_n(c)=0, \tr_n(\frac{1}{c})=1 ,\\
		2,~~~&\mathrm{if}~\tr_n(c)=\tr_n(\frac{1}{c})=0,
	\end{aligned} \right.\]
	and 
	\[{}_c\omega_3= \left\{ \begin{aligned}{}
		0,~~~&\mathrm{if}~\tr_n(c)=\tr_n(\frac{1}{c})=1 ,\\
		1,~~~&\mathrm{if}~\tr_n(c)=1, \tr_n(\frac{1}{c})=0~ \mathrm{or}~\tr_n(c)=0, \tr_n(\frac{1}{c})=1 ,\\
		2,~~~&\mathrm{if}~\tr_n(c)=\tr_n(\frac{1}{c})=0.
	\end{aligned} \right.\]
	By solving the equation system (\ref{omegaiomega}), the desired result follows.

\end{proof}

Next, we calculate the $ c $-differential spectrum of the inverse function over $\gf_{p^n}$ when $p$ is odd. We have the following results.
\begin{theorem}\label{thm-oddAPcN1}
	Let $p$ be any odd prime. $ F(x)=x^{p^{n}-2} $ be a power function over $ \gf_{p^{n}}$. 
	When $ 0,1,4,4^{-1}\ne c \in \gf_{p^{n}}$, the $ c $-differential spectrum of it is given as the following six cases:
	
	\noindent ({\rm i}) $\mathbb S=\{{}_c\omega_0=\frac{p^n-3}{2},{}_c\omega_1=3,{}_c\omega_2=\frac{p^n-3}{2}\}$ if $\chi(c^2-4c)=\chi(1-4c)=-1$ and $\chi(c)=-1$;
	
	\noindent ({\rm ii}) $\mathbb S=\{{}_c\omega_0=\frac{p^n-5}{2},{}_c\omega_1=5,{}_c\omega_2=\frac{p^n-5}{2}\}$ if $\chi(c^2-4c)=\chi(1-4c)=-1$ and $\chi(c)=1$;
	
	\noindent ({\rm iii}) $\mathbb S=\{{}_c\omega_0=\frac{p^n-1}{2},{}_c\omega_1=2,{}_c\omega_2=\frac{p^n-5}{2},{}_c\omega_3=1\}$ if $\chi(c^2-4c)\chi(1-4c)=-1$ and $\chi(c)=-1$;
	
	\noindent ({\rm iv}) $\mathbb S=\{{}_c\omega_0=\frac{p^n-3}{2},{}_c\omega_1=4,{}_c\omega_2=\frac{p^n-7}{2},{}_c\omega_3=1\}$ if $\chi(c^2-4c)\chi(1-4c)=-1$ and $\chi(c)=1$;
	
	\noindent ({\rm v}) $\mathbb S=\{{}_c\omega_0=\frac{p^n+1}{2},{}_c\omega_1=1,{}_c\omega_2=\frac{p^n-7}{2},{}_c\omega_3=2\}$ if $\chi(c^2-4c)=\chi(1-4c)=1$ and $\chi(c)=-1$;
	
	\noindent ({\rm vi}) $\mathbb S=\{{}_c\omega_0=\frac{p^n-1}{2},{}_c\omega_1=3,{}_c\omega_2=\frac{p^n-9}{2},{}_c\omega_3=2\}$ if $\chi(c^2-4c)=\chi(1-4c)=1$ and $\chi(c)=1$.
	
\end{theorem}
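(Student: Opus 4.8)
The plan is to run the same argument as in Theorem~\ref{Inversep=2APcN}, now invoking Lemma~\ref{quadraticequationsolution}(ii) in place of part~(i). Fix $b\in\gf_{p^n}$ and analyse the $c$-differential equation $\Delta_c(x)=(x+1)^{p^n-2}-cx^{p^n-2}=b$. Since $x^{p^n-2}=x^{-1}$ for $x\neq0$ while $0^{p^n-2}=0$, I would first dispose of $b=0$: here $x=0$ gives $\Delta_c(0)=1$ and $x=-1$ gives $\Delta_c(-1)=c$, neither equal to $0$, while for $x\in\gf_{p^n}^\#$ the equation reads $\frac1{x+1}=\frac{c}{x}$, i.e.\ $x=\frac{c}{1-c}$, which lies in $\gf_{p^n}^\#$ because $c\neq0,1$; hence $\delta_c(0)=1$.

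For $b\neq0$, note that $x=0$ is a solution iff $b=1$ and $x=-1$ is a solution iff $b=c$. For the remaining $x\in\gf_{p^n}^\#$, multiplying $\frac1{x+1}-\frac cx=b$ through by $x(x+1)$ gives
\begin{equation}\label{prop-odd-quad}
bx^2+(b-1+c)x+c=0 ,
\end{equation}
and one checks that $x=0$ (which would force $c=0$) and $x=-1$ (which produces the value $1$) are never roots of (\ref{prop-odd-quad}), so its roots automatically lie in $\gf_{p^n}^\#$. Its discriminant is
\[
\Delta(b)=(b-1+c)^2-4bc=\big(b-(c+1)\big)^2-4c ,
\]
so by Lemma~\ref{quadraticequationsolution}(ii), (\ref{prop-odd-quad}) has $2$, $1$, or $0$ roots according as $\chi(\Delta(b))=1$, $\Delta(b)=0$, or $\chi(\Delta(b))=-1$. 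The key observation is that $\Delta(b)=0$ admits a solution $b$ precisely when $4c$ (equivalently $c$) is a square, in which case $b=c+1\pm2\sqrt c=(\sqrt c\pm1)^2$, and these two values are nonzero since $c\neq1$.

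Next I would evaluate at the distinguished values of $b$. At $b=1$, (\ref{prop-odd-quad}) becomes $x^2+cx+c=0$ with discriminant $\Delta(1)=c^2-4c\neq0$ (as $c\neq0,4$), so $\delta_c(1)=3$ when $\chi(c^2-4c)=1$ and $\delta_c(1)=1$ when $\chi(c^2-4c)=-1$; at $b=c$, (\ref{prop-odd-quad}) becomes $cx^2+(2c-1)x+c=0$ with discriminant $\Delta(c)=1-4c\neq0$ (as $c\neq4^{-1}$), so $\delta_c(c)=3$ when $\chi(1-4c)=1$ and $\delta_c(c)=1$ when $\chi(1-4c)=-1$. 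Using $c\notin\{0,1,4,4^{-1}\}$ one verifies that the distinguished values $0$, $1$, $c$, and (when $\chi(c)=1$) $c+1\pm2\sqrt c$ are pairwise distinct --- this is exactly the point where the hypotheses $c\neq4,4^{-1}$ are needed. Collecting the information: $\delta_c(0)=1$; $\delta_c(b)=1$ at the two values $b=c+1\pm2\sqrt c$ when $\chi(c)=1$ (with no such $b$ when $\chi(c)=-1$); $\delta_c(b)\in\{1,3\}$ at $b=1,c$ as above; and $\delta_c(b)\in\{0,2\}$ for every other $b$. Hence ${}_c\omega_3$ is the number of elements of $\{1,c\}$ at which $\delta_c$ equals $3$, and ${}_c\omega_1=1+\big(1+\chi(c)\big)+\#\{t\in\{1,c\}:\delta_c(t)=1\}$; evaluating these in each of the six cases (i)--(vi) gives $({}_c\omega_1,{}_c\omega_3)=(3,0),(5,0),(2,1),(4,1),(1,2),(3,2)$ respectively.

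Finally, substituting each pair $({}_c\omega_1,{}_c\omega_3)$ into the identities $\sum_i{}_c\omega_i=\sum_i i\cdot{}_c\omega_i=p^n$ of Theorem~\ref{properties1} and solving the resulting linear system gives ${}_c\omega_2=\frac12\big(p^n-{}_c\omega_1-3\,{}_c\omega_3\big)$ and ${}_c\omega_0=\frac12\big(p^n-{}_c\omega_1+{}_c\omega_3\big)$, which reproduces exactly the six spectra (i)--(vi). I expect the only genuinely delicate part to be the bookkeeping of the distinguished values of $b$ --- confirming they never collide and that (\ref{prop-odd-quad}) never yields the spurious roots $x\in\{0,-1\}$ --- since everything else is a routine verification.
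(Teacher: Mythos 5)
Your proposal is correct and follows essentially the same route as the paper: reduce to the quadratic $bx^2+(b+c-1)x+c=0$, use Lemma~\ref{quadraticequationsolution}(ii) at the distinguished values $b=0,1,c$ and at the roots of $(b-(c+1))^2=4c$ to pin down ${}_c\omega_1$ and ${}_c\omega_3$, then solve the linear system (\ref{omegaiomega}). Your bookkeeping of the pairwise distinctness of the distinguished $b$'s and the non-occurrence of the spurious roots $x\in\{0,-1\}$ is in fact slightly more careful than the paper's own write-up.
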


\begin{proof}
	For $ b \in \gf_{{p^n}} $, we consider the $ c $-diffferential equation
	\begin{equation}\label{poddc}
		(x+1)^{p^n-2}-cx^{p^n-2}=b.
	\end{equation}	
	Similarly, we determine the values of ${}_c\omega_1$ and ${}_c\omega_3$.	
	
	{\it Case 1: } $b=0$. It can be easily seen that $x=\frac{c}{1-c}$ is the unique solution of (\ref{poddc}), i.e. $\delta_c(0)=1$.
	
	{\it Case 2: } $ b=1$. Then $ x=0 $ is a solution of (\ref{poddc}), and $ x=-1 $ is not a solution of (\ref{poddc}). Assume that $ x\neq0,-1 $, by multiplying $ x(x+1) $ on both sides of (\ref{poddc}), we get $ x^{2}+cx+c=0 $. The discriminant of this quadratic equation is $c^{2}-4c $, which is not zero since $c\neq 0,4$. By Lemma \ref{quadraticequationsolution}(ii), this equation has two solutions if $\chi(c^2-4c)=1$, and no solution if $\chi(c^2-4c)=-1$. We obtain $\delta_c(1)=3$ if $\chi(c^2-4c)=1$, $\delta_c(1)=1$ if $\chi(c^2-4c)=- 	1$.
	
	{\it Case 3: } $ b=c$. Then $ x=-1 $ is a solution of (\ref{poddc}), while $ x=0 $ is not a solution. Next we assume that $ x\neq0,-1 $. By multiplying $ x(x+1) $ on both sides of  (\ref{poddc}), we get $ cx^{2}+(2c-1)x+c=0 $. The discriminant of this quadratic equation is $1-4c $, which is not zero since $c\neq 0,4^{-1}$. By Lemma \ref{quadraticequationsolution}(ii), this equation has two solutions if $ \chi(1-4c)=1 $, 
	and no solution if $ \chi(1-4c)=-1 $. We obtain $\delta_c(c)=1$ if $ \chi(1-4c)=-1 $ and $\delta_c(c)=3$ if $ \chi(1-4c)=1 $.
	
	{\it Case 4: } Let $ b\neq 0, 1,c $. Then $ x\neq 0, 1 $. By multiplying $ x(x+1) $ on both sides of (\ref{poddc}), we obtain 	
	\begin{equation}\label{p-main}
		bx^{2}+(b+c-1)x+c=0.
	\end{equation}
	The equation (\ref{p-main}) has a unique solution if and only if the discriminant
	$(b+c-1)^2-4bc=0$. That is,  
	\[b^2-2(c+1)b+(c-1)^2=0.\]
	The above equation on $b$ is quadratic, and the discriminant is $ 16c $. We assert that, if $\chi(c)=1$, there exist two $b$'s such that (\ref{p-main}) has one solution, and if $\chi(c)=-1$, such $b$ do not exist. 
	
	By discussions as above, we summarize the values of $_c{\omega_1}$ and $_c{\omega_3}$ in the following table. By solving the equation system (\ref{omegaiomega}), the desired result follows.
	\begin{table}[!htbp]
		\begin{center}
			\begin{minipage}{174pt}
				\caption{The values of $_c{\omega_1}$ and $_c{\omega_3}$}\label{table-2}
				\begin{tabular}{@{}lllll@{}}
					\toprule
					 $_c{\omega_1}$ & $_c{\omega_3}$ &$\chi(c^{2}-4c)$&$\chi(1-4c)$&$\chi(c)$ \\
					\midrule
					 $3$ & $2$ &$ 1 $& $ 1 $& $ 1 $\\
					 $1$ & $2$ &$ 1 $& $ 1 $& $ -1 $\\
					 $2$ & $1$ &$ 1 $& $ -1 $& $ -1 $\\
					 $4$ & $1$ &$ 1 $& $ -1 $& $ 1 $\\
					 $4$ & $1$ &$ -1 $& $ 1 $& $ 1 $\\
					 $2$ & $1$ &$ -1 $& $ 1 $& $ -1 $\\
					 $3$ & $0$ &$ -1 $& $ -1 $& $ -1 $\\
					 $5$ & $0$ &$ -1 $& $ -1 $& $ 1 $\\
					\botrule
				\end{tabular}
			\end{minipage}
		\end{center}
	\end{table}
	
\end{proof}	
	
\subsection{The $(-1)$-differential spectra of some power functions}

The case that $c=-1$ is a special case. Sometimes the $(-1)$-differential uniformity of a power function is lower than its $c$-differential uniformity for other $c$. In this subsection, we compute the $(-1)$-differential spectra of some power functions. We begin this subsection with a simple result. 

In \cite{2021SYZ}, the authors reported that the power function $x^{\frac{3^{n}+3}{2}} $ over $ \gf_{3^{n}} $ is AP$c$N if $ n $ is even. We have the following theorem.
\begin{theorem}\label{p=3-niseven}
	Let $ G\left( x \right) = {x^{\frac{{{3^n} +3}}{2}}} $ on $ \gf_{3^{n}} $, $ n\ge 2 $. The $ (-1) $-differential spectrum of $ G(x) $ is 
	\[\mathbb{S} = \left\{ {{{}_{-1}\omega _0} = \frac{{{3^n} - 1}}{2},{{}_{-1}\omega _1} = 1,{{}_{-1}\omega _2} = \frac{{{3^n} - 1}}{2}} \right\}.\]  
\end{theorem}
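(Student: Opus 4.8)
The plan is to reduce the whole computation to finding a single entry ${}_{-1}\omega_{1}$ of the spectrum: the identities of Theorem~\ref{properties1}, combined with the fact (from \cite{2021SYZ}) that $G$ is AP$c$N for $c=-1$ when $n$ is even, already determine the spectrum up to the value of ${}_{-1}\omega_{1}$, and then an involution argument gives ${}_{-1}\omega_{1}=1$.

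First I would record two reductions. Writing $d=\frac{3^{n}+3}{2}$, one has $d-2=\frac{3^{n}-1}{2}$, hence $d\equiv 2\pmod{\frac{3^{n}-1}{2}}$ and therefore $x^{d}=\chi(x)x^{2}$ for every $x\in\gf_{3^{n}}$; in particular
\[\Delta_{-1}(x)=(x+1)^{d}+x^{d}=\chi(x+1)(x+1)^{2}+\chi(x)x^{2}.\]
Moreover $n$ even forces $3^{n}\equiv 1\pmod 4$, so $d$ is even and $(-1)^{d}=1$, i.e. $2^{d}=1$, in $\gf_{3^{n}}$. Next, since $G$ is AP$c$N for $c=-1$ we have ${}_{-1}\Delta_{G}=2$, so the $(-1)$-differential spectrum is $\mathbb{S}=\{{}_{-1}\omega_{0},\,{}_{-1}\omega_{1},\,{}_{-1}\omega_{2}\}$. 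Feeding this into (\ref{omegaiomega}) gives ${}_{-1}\omega_{0}+{}_{-1}\omega_{1}+{}_{-1}\omega_{2}=3^{n}$ and ${}_{-1}\omega_{1}+2\,{}_{-1}\omega_{2}=3^{n}$, whence ${}_{-1}\omega_{0}={}_{-1}\omega_{2}$ and $2\,{}_{-1}\omega_{0}+{}_{-1}\omega_{1}=3^{n}$. Thus the spectrum is completely determined once we know ${}_{-1}\omega_{1}$, and the theorem reduces to the claim ${}_{-1}\omega_{1}=1$.

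To establish ${}_{-1}\omega_{1}=1$ I would use the involution $\iota\colon x\mapsto -1-x$ on $\gf_{3^{n}}$. Since $\iota(x)+1=-x$ and $\iota(x)=-(x+1)$, and since $(-1)^{d}=1$, we obtain $\Delta_{-1}(\iota(x))=(-x)^{d}+(-(x+1))^{d}=x^{d}+(x+1)^{d}=\Delta_{-1}(x)$, so $\iota$ permutes the solution set $\{x\in\gf_{3^{n}}:\Delta_{-1}(x)=b\}$ for each $b$. In characteristic $3$ the only fixed point of $\iota$ is $x=1$, and $\Delta_{-1}(1)=2^{d}+1=2$. Hence the restriction of $\iota$ to the solutions of $\Delta_{-1}(x)=b$ is fixed-point free unless $b=2$, so $\delta_{-1}(b)$ is even for every $b\neq 2$ and odd for $b=2$. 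Since ${}_{-1}\Delta_{G}=2$, this forces $\delta_{-1}(b)\in\{0,2\}$ for $b\neq 2$ and $\delta_{-1}(2)=1$; consequently ${}_{-1}\omega_{1}=\#\{b:\delta_{-1}(b)=1\}=1$, and substituting back yields ${}_{-1}\omega_{0}={}_{-1}\omega_{2}=\frac{3^{n}-1}{2}$, the asserted spectrum.

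The argument is short, and its only real content is spotting the involution $x\mapsto -1-x$ and observing that it preserves $\Delta_{-1}$ precisely because $n$ even makes $d$ even; everything else is bookkeeping with the identities of Theorem~\ref{properties1}. The main obstacle arises only if one insists on a self-contained proof that does not use the AP$c$N result of \cite{2021SYZ}: then one has to prove ${}_{-1}\Delta_{G}\le 2$ directly. For that I would split $\gf_{3^{n}}$ into $\{0,-1\}$ and the sets $S_{i,j}$, note that on $S_{1,1}$ and $S_{-1,-1}$ the equation $\Delta_{-1}(x)=b$ becomes the quadratics $x^{2}+x+(b-1)=0$ and $x^{2}+x-(b+1)=0$, while on $S_{1,-1}$ and $S_{-1,1}$ it becomes the linear equations $x=b+1$ and $x=1-b$, and then verify, using Lemma~\ref{quadraticequationsolution}(ii) and the character-sum evaluation of Lemma~\ref{charactersumquadratic}, that the contributions coming from the four regions and the two exceptional points never total more than $2$. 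That region-by-region bookkeeping is the laborious part, and the involution route sidesteps it entirely.
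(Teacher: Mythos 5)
Your proof is correct and follows essentially the same route as the paper: both identify the involution $x\mapsto -x-1$ (which preserves $\Delta_{-1}$ because $d=\frac{3^n+3}{2}$ is even for $n$ even), locate its unique fixed point $x=1$ with image $b=-1=2$, conclude from the AP$c$N property of \cite{2021SYZ} that ${}_{-1}\omega_1=1$, and finish by solving the system (\ref{omegaiomega}). Your write-up is in fact somewhat more careful than the paper's about why the parity argument forces $\delta_{-1}(b)$ even for $b\neq 2$.
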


\begin{proof}
	To determine the $ (-1) $-differential spectrum of $x^{\frac{{{3^n} + 3}}{2}}$, we consider the equation
	\begin{equation}\label{3-c}
		\Delta_{-1}(x)={\left( {x + 1} \right)^{\frac{{{3^n} + 3}}{2}}} + {x^{\frac{{{3^n} + 3}}{2}}} = b
	\end{equation}
	for $ b\in \gf_{{3^n}} $.
	Note that  $ x $ is a solution of (\ref{3-c}) if and only if $ -x-1 $ is a solution of (\ref{3-c}). When $x=-x-1$, then $x=1$ and the corresponding $b=-1$ satisfies that $\delta_{-1}(-1)$ is odd. Then  $\delta_{-1}(-1)=1$ since $F(x)$ is AP$c$N. We get $ _{-1}\omega _1=1 $. Then by solving the equation (\ref{omegaiomega}), we complete the proof.
\end{proof}

It was shown that $ x^{3^{n}-3} $ is an APN power mapping of $ \gf_{3^{n}} $ when $ n>1 $ is odd \cite{HRS}, is differentially $4$-uniform when $ n = 2\left( {\bmod 4} \right)$ \cite{2020XZL}, and is differentially $5$-uniform when $ n = 0\left( {\bmod 4} \right) $ \cite{2020XZL}. Moreover, \cite{2021SYZ} showed that this power function has low $ (-1) $-differential uniformity. In the following theorem, we determine the $ (-1) $-differential spectrum of $x^{3^n-3}$ over $\gf_{3^n}$.

\begin{theorem}\label{p=3d=-2} Let $ H\left( x \right) = {x^{3^{n}-3}} $ be a power function over $ {\gf_{{3^n}}} $. When $n = 0\left( {\bmod 4} \right) $, $ H $ is differentially $ (-1,6) $-uniform with $ (-1) $-differential spectrum  
	\[\begin{gathered}
		\mathbb{S} = \left\{ {{{}_{-1}\omega _0} = \frac{{5 \cdot {3^n} - 3}}{8},{{}_{-1}\omega _1} = 1,{{}_{-1}\omega _2} = \frac{{{3^n} + 3}}{4},{{}_{-1}\omega _4} = \frac{{{3^n} - 17}}{8}, 
	{{}_{-1}\omega _6} = 1} \right\}. \hfill \\ 
	\end{gathered} \]
	When $n = 2\left( {\bmod 4} \right) $, $ H $ is differentially $ (-1,4) $-uniform with $ (-1) $-differential spectrum \[\mathbb{S} = \left\{ {{{}_{-1}\omega _0} = \frac{{5 \cdot {3^n} - 13}}{8},{{}_{-1}\omega _1} = 1,{{}_{-1}\omega _2} = \frac{{{3^n} + 7}}{4},{{}_{-1}\omega _4} = \frac{{{3^n} - 9}}{8}} \right\}.\]
	When $n = 1,3\left( {\bmod 4} \right) $, $ H $ is differentially $ (-1,4) $-uniform with $ (-1) $-differential spectrum \[\mathbb{S} = \left\{ {{{}_{-1}\omega _0} = \frac{{5 \cdot {3^n} - 7}}{8},{{}_{-1}\omega _1} = 1,{{}_{-1}\omega _2} = \frac{{{3^n} + 1}}{4}, {{}_{-1}\omega _4} = \frac{{{3^n} - 3}}{8}{\rm{ }}} \right\}.\]	
\end{theorem}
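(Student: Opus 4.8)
The plan is to exploit that $x^{3^{n}-3}=x^{-2}$ for every $x\in\gf_{3^{n}}^{*}$ (with $0^{3^{n}-3}=0$). Fixing $b\in\gf_{3^{n}}$, the equation $\Delta_{-1}(x)=(x+1)^{3^{n}-3}+x^{3^{n}-3}=b$ has three types of solution: $x=0$ and $x=-1$ each give $b=1$ and so contribute $2$ to $\delta_{-1}(1)$ and nothing to any other $b$, while for $x\notin\{0,-1\}$ the equation reads $(x+1)^{-2}+x^{-2}=b$. Clearing denominators in the latter, setting $t=x(x+1)$, using $x^{2}+(x+1)^{2}=2t+1$, and simplifying in characteristic $3$, it becomes
\[
bt^{2}+t+2=0 .
\]
Each root $t$ of this is then pulled back through $t=x^{2}+x$, i.e.\ by solving $x^{2}+x-t=0$; since $t=0$ is never a root, the resulting $x$ automatically avoids $\{0,-1\}$, and distinct roots $t$ produce disjoint sets of $x$.

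Everything now reduces to quadratic-character counting. For $b\neq0$ the discriminant of $bt^{2}+t+2=0$ equals $1-8b=b+1$ in characteristic $3$, so this equation has $1+\chi(b+1)$ distinct roots, and a root $t=-1$ occurs only when $b=-1$. For each root $t$ the equation $x^{2}+x-t=0$ has discriminant $1+4t=1+t$ and hence $1+\chi(1+t)$ solutions. When $\chi(b+1)=1$, Vieta's formulas give $(1+t_{1})(1+t_{2})=1+1/b=(b+1)/b$, so $\chi\bigl((1+t_{1})(1+t_{2})\bigr)=\chi(b)$; reading off cases, for $b\notin\{0,1,-1\}$ one gets
\[
\delta_{-1}(b)=
\begin{cases}
0, & \chi(b+1)=-1,\\
2, & b\in S_{-1,1},\\
0\ \text{or}\ 4, & b\in S_{1,1}.
\end{cases}
\]
The three exceptional values are treated directly: the degenerate discriminant $b+1=0$ forces the unique root $t=-1$, hence the unique solution $x=1$, so $\delta_{-1}(-1)=1$; at $b=0$ the $t$-equation degenerates to $t=1$, giving $\delta_{-1}(0)=1+\chi(-1)$; and $\delta_{-1}(1)=2+N(1)$, where $N(1)$ counts the $x\notin\{0,-1\}$ with $(x+1)^{-2}+x^{-2}=1$.

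Pinning down $\delta_{-1}(1)$ is the crux and is exactly what separates the three cases of the theorem; I expect this step to be the main obstacle. If $n$ is odd, then $\chi(b+1)=\chi(2)=\chi(-1)=-1$, so $N(1)=0$ and $\delta_{-1}(1)=2$. If $n$ is even, let $\mathrm{i}$ be a square root of $-1$ in $\gf_{3^{n}}$; the roots of $t^{2}+t+2=0$ are $1\pm\mathrm{i}$, and since $\bigl(1+(1-\mathrm{i})\bigr)\bigl(1+(1+\mathrm{i})\bigr)=(-1-\mathrm{i})(-1+\mathrm{i})=2=-1$, the two factors have the same character, so $N(1)=2+2\chi(1+\mathrm{i})$. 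It remains to decide whether $1+\mathrm{i}$ is a square in $\gf_{3^{n}}$: since $(1+\mathrm{i})^{4}=-1$, the element $1+\mathrm{i}$ has order $8$ in $\gf_{9}^{*}$, hence is a square in $\gf_{3^{n}}$ precisely when $16\mid 3^{n}-1$, i.e.\ when $4\mid n$. Thus $\delta_{-1}(1)=6$ for $n\equiv 0\pmod 4$ and $\delta_{-1}(1)=2$ for $n\equiv 2\pmod 4$; combined with the previous step this also gives the $(-1)$-differential uniformity as $6$, $4$, $4$ in the three cases.

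Finally I would assemble the spectrum. By the above, $\delta_{-1}(b)=2$ holds exactly for $b\in S_{-1,1}$ together with whichever of $0$ and $1$ lands in the $\delta=2$ class; a short character sum (by Lemma \ref{charactersumquadratic}, $\sum_{x\in\gf_{3^{n}}}\chi(x^{2}+x)=-1$) yields $|S_{-1,1}|=\frac14\bigl(3^{n}-2+\chi(-1)\bigr)$, which determines ${}_{-1}\omega_{2}$ explicitly in each case. Since ${}_{-1}\omega_{1}=1$, ${}_{-1}\omega_{6}$ is now known, and $\delta_{-1}$ takes values only in $\{0,1,2,4,6\}$, the remaining unknowns ${}_{-1}\omega_{0}$ and ${}_{-1}\omega_{4}$ are forced by the identities $\sum_{i}{}_{-1}\omega_{i}=\sum_{i}i\cdot{}_{-1}\omega_{i}=3^{n}$ of Theorem \ref{properties1} (equation (\ref{omegaiomega})). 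Identity (\ref{i^2omega_ieqution}), with $\gcd(3^{n}-3,3^{n}-1)=2$, provides an independent check but is not needed.
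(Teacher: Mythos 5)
Your proposal is correct and follows essentially the same route as the paper: your substitution $t=x(x+1)$ turns the equation into $bt^2+t+2=0$, which is exactly the paper's quadratic $bz^2+(b+1)z+b+1=0$ under $z=(x-1)^2=t+1$; like the paper you treat $b=0,\pm1$ separately (your order-$8$ argument for $1+\mathrm{i}\in\gf_{3^2}$, deciding squareness via $16\mid 3^n-1$, replaces the paper's observation that the roots of $y^4-y^2-1$ lie in $\gf_{3^4}\setminus\gf_{3^2}$), you compute ${}_{-1}\omega_2$ from the same quadratic character sum, and you finish with the moment identities (\ref{omegaiomega}). One remark: if you actually solve the final linear system in the case $n\equiv 0\pmod 4$ you obtain ${}_{-1}\omega_0=\frac{5\cdot 3^n-5}{8}$, not the value $\frac{5\cdot 3^n-3}{8}$ printed in the theorem; the printed value is a misprint (it is not an integer for $n=4$ and fails $\sum_i {}_{-1}\omega_i=3^n$), so this discrepancy reflects an error in the statement rather than a gap in your argument.
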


\begin{proof} 
	To determine the $ (-1) $-differential spectrum of this power function, we consider the equation
	\begin{equation}\label{1}
		\Delta_{-1}(x)=(x+1)^{3^n-3} + x^{3^n-3}=b
	\end{equation}
	for $ b \in \gf_{{3^n}} $. Firstly we consider $ b=1$. Note that $ x=0 $ and $ x=-1 $ are both solutions of (\ref{1}). For $ x \ne 0,-1 $, we multiply both sides of  (\ref{1}) by $ x^{2}(x+1)^{2} $ and obtain
	\begin{equation}\label{2}
		x^{4}-x^{3}-x^{2}+x-1=0.
	\end{equation}
	Let $ y=x-1 $, then $ y\ne-1,1 $, and (\ref{2}) becomes
	\begin{equation}\label{4}
		y^{4}-y^{2}-1=0.
	\end{equation}
	Next we prove that the solutions of (\ref{4}) are all in $\gf_{3^4}\setminus\gf_{3^2}$. Obviously, the solutions of (\ref{4}) are in $\gf_{3^4}$. If $y\in \gf_{3^2}$ is a solution of (\ref{4}), then $y\neq 0$ and $y^4=\pm 1$. If $y^4=1$ (respectively, $y^4=-1$), we can obtain $y^2=0$ by (\ref{4}) (respectively, $y^2=1$), which is a contradiction. By the discussion as above, we conclude that $\delta_{-1}(1)=6$ when $n = 0\left( {\bmod 4} \right)$ and $\delta_{-1}(1)=2$ otherwise.
	
	For $b\neq 1$, we know that $x=0$ and $x=-1$ are not solutions of (\ref{1}). When $b=0$, the equation (\ref{1}) becomes
	\begin{equation}\label{0}
		x^2+x-1=0.
	\end{equation}
	It can be similarly proved that the solutions of (\ref{0}) are in $\gf_{3^2}\setminus \gf_3$. We conclude that $\delta_{-1}(0)=2$ if $n$ is even, and  $\delta_{-1}(0)=0$ if $n$ is odd.

	If $  b \ne 0,1 $, then $x\neq0,1$. The differential equation (\ref{1}) can be deduced as
	\begin{equation}\label{2.5}
		bx^4-bx^3+(b+1)x^2+x-1=0.
	\end{equation}
	Let $y=x-1$, then (\ref{2.5}) becomes
	\begin{equation}\label{3}
		by^{4}+(b+1)y^{2}+b+1=0.
	\end{equation}  
	We consider the solutions of equation (\ref{3}). Let $ z=y^{2} $, then we have
	\begin{equation}\label{6}
		bz^{2}+(b+1)z+b+1=0.
	\end{equation}
	Note that (\ref{6}) is a quadratic equation on $z$, and the discriminant of this quadratic equation is $\Delta=b+1 $. We discuss in the following three cases.
	
	\noindent Case ({\rm i}) $b=-1$. We get $ z=0 $, then $ y=0 $ and $ x=1 $ consequently. So $ \delta_{-1}(-1)=1 $.
	
	\noindent Case ({\rm ii}) $\chi \left( {b + 1} \right) = 1$. Herein, $\chi$ denotes the quadratic character on $\gf_{3^n}$.
	We know that (\ref{6}) has two distinct nonzero solutions, namely $ z_{1}$ and $z_{2} $. Now we need to determine whether $ y^{2}=z_i $ ($i=1, 2$) has solutions over $ \gf_{3^{n}} $. If $ \chi \left( {z_{1}} \right) =\chi \left( {z_{2}} \right) =1 $, then (\ref{3}) has four solutions. If $ \chi \left( {{z_1}{z_2}} \right) =  - 1 $, (\ref{3}) has two solutions. If $ \chi \left( {z_{1}} \right) =\chi \left( {z_{2}} \right) =-1 $, (\ref{3}) has no solution. 
	
	\noindent Case ({\rm iii}) $\chi \left( {b + 1} \right) = -1$. Obviously, (\ref{3}) has no solution.

	Summarizing the discussion as above, (\ref{1}) cannot have $3$ or $5$ solutions. We have, $_{-1}\omega_3=_{-1}\omega_5=0$. Moreover, $_{-1}\omega_1=1$, $_{-1}\omega_6=1$ if $n = 0\left( {\bmod 4} \right) $ and $_{-1}\omega_6=0$ otherwise. By (\ref{omegaiomega}), we know that $_{-1}\omega_0$, $_{-1}\omega_2$ and $_{-1}\omega_4$ satisfy
	\begin{align}\label{omega6d=-2}
		\left\{
		\begin{array}{ll}
			_{ - 1}{\omega _0}{ + _{ - 1}}{\omega _2}{ + _{ - 1}}{\omega _4} &= {3^n} - 1{ - _{ - 1}}{\omega _6},\\
			{2_{ - 1}}{\omega _2} + {4_{ - 1}}{\omega _4} &= {3^n} - 1 - {6_{ - 1}}{\omega _6}.
		\end{array}
		\right.
	\end{align}
	
	Denote by $N=\#\{b\in\gf_{3^n}\setminus \gf_3~:~\chi(b+1)=1, \chi(b)=-1\}$. Then
	$_{-1}\omega_2=N+1$ when $ n = 0,1,3\left( {\bmod 4} \right) $, $_{-1}\omega_2=N+2$ when $ n = 2\left( {\bmod 4} \right) $. By Lemma \ref{charactersumquadratic}, we have
	\begin{align*}
		N&=\frac{1}{4}\sum\limits_{b\in\gf_{3^n}\setminus\gf_3} {(\chi(b+1)+ 1 )}{( {1-\chi (b) }) }\\
		&=\frac{1}{4}(\sum\limits_{b\in\gf_{3^n}} {(\chi(b+1)+ 1 )}{( {1-\chi (b) }) -3+\chi(-1))}\\
		&=\frac{1}{4}(-\sum\limits_{b\in\gf_{3^n}}\chi(b(b+1))+ \sum\limits_{b\in\gf_{3^n}}\chi(b+1)-\sum\limits_{b\in\gf_{3^n}}\chi(b)+\sum\limits_{b\in\gf_{3^n}}1-3+\chi(-1))\\
		&=\frac{3^n-2+\chi(-1)}{4}.
	\end{align*}
	Then by solving (\ref{omega6d=-2}), the desired result follows. 
\end{proof}

In \cite{2021SYZ}, the authors studied the $ (-1) $-differential uniformity of the power mapping $ x^{\frac{p^{k}+1}{2}} $, where $p$ is an odd prime. In the following two theorems, we determine the $ (-1) $-differential spectrum of $ x^{\frac{p^{k}+1}{2}} $ with some conditions.
\begin{theorem}\label{poddp+1/2-1}
	Let $ F(x)=x^{\frac{p^{k}+1}{2}} $ be a power function over $ \gf_{p^{n}}$, where $ p \equiv 1\left( {\bmod~4} \right) $, $ \gcd(n,k)=1 $, and $ \frac{2n}{\gcd(2n,k)} $ is even. The $ (-1) $-differential spectrum of $F(x)$ is given by 
	\[\mathbb{S} = \left\{ {{}_{-1}{\omega _0} = \frac{{({p^n} + 1)(p-1)}}{2(p+1)},{}_{-1}{\omega _1} = \frac{{{p^n} - 3}}{2},{}_{-1}{\omega _{\frac{{p + 3}}{4}}} = 2,{}_{-1}{\omega _{\frac{{p + 1}}{2}}} = \frac{{{p^n} -p}}{{p + 1}} } \right\} \]
	when $ n $ is odd, and is given by
	\[\mathbb{S} = \left\{ {{}_{-1}{\omega _0} = \frac{{(p^{n}-1)(p-1)}}{2(p+1)},{}_{-1}{\omega _1} = \frac{{{p^n} - 1}}{2},{}_{-1}{\omega _{\frac{{p + 3}}{4}}} = 2,{}_{-1}{\omega _{\frac{{p + 1}}{2}}} = \frac{{{p^n} - p-2}}{{p + 1}} } \right\} \]
	when $ n $ is even.
\end{theorem}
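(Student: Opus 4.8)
Throughout I write $d=\frac{p^{k}+1}{2}$ and $\Delta_{-1}(x)=(x+1)^{d}+x^{d}$, so that $\delta_{-1}(b)=\#\{x\in\gf_{p^n}:\Delta_{-1}(x)=b\}$ and ${}_{-1}\omega_{i}=\#\{b:\delta_{-1}(b)=i\}$. The plan has four stages.

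\emph{Step 1: symmetry and the value $b=0$.} Since $p\equiv 1\pmod 4$ we get $p^{k}\equiv 1\pmod 4$, hence $d$ is odd and $(-1)^{d}=-1$. Therefore $\Delta_{-1}(-1-x)=(-x)^{d}+(-(x+1))^{d}=-\Delta_{-1}(x)$, which gives $\delta_{-1}(b)=\delta_{-1}(-b)$ for every $b$ and $\Delta_{-1}(-\frac{1}{2})=0$. For $b=0$ one checks that $x=0,-1$ are not solutions, and the substitution $t=1+\frac{1}{x}$ transforms $\Delta_{-1}(x)=0$ into $t^{d}=-1$; as $d$ is odd, $-1=(-1)^{d}$ is a $d$-th power, so $\delta_{-1}(0)=\gcd(d,p^{n}-1)=\frac{1}{2}\gcd(p^{k}+1,p^{n}-1)$, which equals $\frac{p+1}{2}$ when $n$ is even and $1$ when $n$ is odd (using $\gcd(p^{k}+1,p^{n}-1)\in\{2,p+1\}$ for $\gcd(n,k)=1$). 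This is the source of the $n$-parity split in the statement.

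\emph{Step 2: reduction to one polynomial equation.} For $b\ne 0$, multiplying $\Delta_{-1}(x)=b$ by $(x+1)^{d}-x^{d}$ and using $(x+1)^{p^{k}+1}-x^{p^{k}+1}=x^{p^{k}}+x+1$ gives $(x+1)^{d}-x^{d}=b^{-1}(x^{p^{k}}+x+1)$; adding, subtracting and squaring then yields
\[
(b^{2}-1-x-x^{p^{k}})^{2}=4b^{2}x^{p^{k}+1}.\qquad(\star)
\]
Because $2$ is Frobenius-fixed and $2d=p^{k}+1$, one has $(2^{d})^{2}=2^{p^{k}+1}=4$, so the substitution $w=2x+1$ brings $(\star)$ to the homogeneous form $(w^{p^{k}}-\lambda w)(w^{p^{k}}-\mu w)=-4b^{2}(b^{2}-1)$ with $\lambda+\mu=4b^{2}-2$, $\lambda\mu=1$. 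Keeping track of which square roots $x^{d}=\pm(\cdots)$ and $(x+1)^{d}=\pm(\cdots)$ one takes shows that the solution set of $(\star)$ (with $x\ne 0,-1$) is the disjoint union of the solution sets of the four equations $(x+1)^{d}+x^{d}=\pm b$ and $(x+1)^{d}-x^{d}=\pm b$, the two signs determining which equation a given solution satisfies; in particular $\delta_{-1}(b)$ is the number of solutions of $(\star)$ of the first type.

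\emph{Step 3: counting the solutions of $(\star)$.} The roots $\lambda,\mu$ of $T^{2}-(4b^{2}-2)T+1$ lie in $\gf_{p^n}$ exactly when $\chi(b^{2}-1)=1$ (the discriminant being $16b^{2}(b^{2}-1)$). Since $\gcd(n,k)=1$, iterating $w\mapsto w^{p^{k}}$ collapses to the norm $N_{\gf_{p^n}/\gf_{p}}$, so $w\mapsto w^{p^{k}}-\lambda w$ is a bijection of $\gf_{p^n}$ or is $p$-to-$1$ onto a hyperplane according as $N_{\gf_{p^n}/\gf_{p}}(\lambda)\ne 1$ or $=1$; writing $z=w^{p^{k}}-\lambda w$ (so $w^{p^{k}}-\mu w=-4b^{2}(b^{2}-1)z^{-1}$) and eliminating $w$ reduces the count to a quadratic-type condition in $z$ whose number of $\gf_{p^n}$-roots is explicit. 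Here all three hypotheses are consumed: $p\equiv 1\pmod 4$ makes $(-1)^{d}=-1$; $\gcd(n,k)=1$ gives the norm collapse; and "$\frac{2n}{\gcd(2n,k)}$ even'' (which, since $\gcd(n,k)=1$, is simply the condition that $k$ be odd) controls how many $b$ force $\lambda,\mu$ and the associated elements to be $(p^{k}-1)$-th powers. The resulting case distinction on $\chi(b^{2}-1)$ and $\chi(b)$ then gives $\delta_{-1}(b)\in\{0,1,\frac{p+3}{4},\frac{p+1}{2}\}$, exhibits $b$ with $\delta_{-1}(b)=\frac{p+1}{2}$ (so ${}_{-1}\Delta_{F}=\frac{p+1}{2}$), and shows that $\delta_{-1}(b)=\frac{p+3}{4}$ occurs at exactly two points, namely $b=\pm1$ (where $(\star)$ degenerates), whence ${}_{-1}\omega_{\frac{p+3}{4}}=2$.

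\emph{Step 4: finishing.} With the support of $\mathbb{S}$ and ${}_{-1}\omega_{\frac{p+3}{4}}=2$ known, the three remaining unknowns ${}_{-1}\omega_{0},{}_{-1}\omega_{1},{}_{-1}\omega_{\frac{p+1}{2}}$ are pinned down by the two relations (\ref{omegaiomega}) together with (\ref{i^2omega_ieqution}) of Theorem~\ref{properties1}, the latter requiring ${}_{-1}N_{4}$ from (\ref{equationsystem}); I would evaluate ${}_{-1}N_{4}$ in the usual way, rewriting its defining system (after $w=2x+1$) as a sum of quadratic character sums over the sets $S_{i,j}$ and applying Lemmas~\ref{quadraticequationsolution} and \ref{charactersumquadratic}. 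Solving the resulting $3\times 3$ linear system then yields the two displayed spectra (one for $n$ odd, one for $n$ even). The main obstacle is clearly Step 3: since $(\star)$ secretly packages four differential equations, the square-root/sign bookkeeping in the $w$-equation must be carried out carefully to avoid double counting, and it is precisely the case analysis producing the value set $\{0,1,\frac{p+3}{4},\frac{p+1}{2}\}$ — and the multiplicity $2$ for the middle value — that makes essential use of all the arithmetic hypotheses.
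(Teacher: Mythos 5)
Your Steps 1 and 2 are sound: the computation of $\delta_{-1}(0)$, the identity $(x+1)^{p^k+1}-x^{p^k+1}=x^{p^k}+x+1$, the derivation of $(\star)$, and the fact that for $b\neq0$ and $x\neq0,-1$ the solutions of $(\star)$ split disjointly into the four classes $(x+1)^d+x^d=\pm b$ and $(x+1)^d-x^d=\pm b$ are all correct. But the proof breaks down exactly at the point you yourself flag, and the gap is structural, not bookkeeping. First, any count of the solutions of $(\star)$, equivalently of $(w^{p^k}-\lambda w)(w^{p^k}-\mu w)=-4b^2(b^2-1)$, only gives the total over all four classes; the symmetry $x\mapsto-1-x$ identifies the $+b$ and $-b$ classes of the sum equation (and, separately, of the difference equation), but nothing in your argument separates the ``sum'' classes from the ``difference'' classes, and that separation is precisely $\delta_{-1}(b)$ versus the ordinary ($c=1$) differential count for the same exponent. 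Without an independent handle on one of the two, the squaring trick cannot be closed. Second, the elimination you describe is not what you claim: substituting $w$ back into $z=w^{p^k}-\lambda w$ produces an equation involving $z^{p^k}$ and $z^{-p^k}$, not ``a quadratic-type condition in $z$ whose number of roots is explicit.'' Third, a case distinction governed by $\chi(b)$ and $\chi(b^2-1)$ cannot produce the theorem's frequencies: quadratic-character conditions cut $\gf_{p^n}$ into pieces of size about $p^n/4$, whereas ${}_{-1}\omega_{\frac{p+1}{2}}=\frac{p^n-p}{p+1}$ (resp.\ $\frac{p^n-p-2}{p+1}$) can only arise from membership of $b$ in the image of a subgroup of index comparable to $p+1$. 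This is what the paper's proof actually isolates: writing $x=\frac14(\theta-\theta^{-1})^2$, $x+1=\frac14(\theta+\theta^{-1})^2$ with $\theta\in\gf_{p^{2n}}$, $\theta^{2(p^n\pm1)}=1$, one gets $\Delta_{-1}(x)=\frac12\bigl(\theta^{p^k+1}+\theta^{-(p^k+1)}\bigr)$, and $\delta_{-1}(b)$ is read off from the $(p+1)$-to-$1$ versus $2$-to-$1$ behaviour of $\theta\mapsto\theta^{p^k+1}$ on the two cyclic groups $S_1,S_2$, with no moment identities needed at all.

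Step 4 also rests on a false analogy. Evaluating ${}_{-1}N_4$ ``as quadratic character sums over the $S_{i,j}$'' works in the paper only for $d=\frac{5^n-3}{2}$, because there $x^d=\chi(x)x^{-1}$; for $d=\frac{p^k+1}{2}$ with $k\not\equiv0\pmod n$ there is no such expression of $x^d$ through $\chi$, so the fourth-moment computation you defer is not an application of Lemma \ref{charactersumquadratic} and would in effect require the same subgroup analysis missing from Step 3. As it stands, the support $\{0,1,\frac{p+3}{4},\frac{p+1}{2}\}$, the value ${}_{-1}\omega_{\frac{p+3}{4}}=2$, and the second-moment input are all asserted rather than established, so the proposal is a program, not a proof.
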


\begin{proof}
	Since $ \frac{2n}{\gcd(2n,k)} $ is even and $\gcd(n,k)=1$, then $ k $ is odd. We consider the $(-1)$-differential function $ \Delta_{-1} \left( x \right) = {\left( {x + 1} \right)^d} + {x^d} $ on $ \gf_{p^{n}} $. There exist $ \alpha ,\beta  \in \gf_{p^{2n}}  $ such that $ x + 1 = {\alpha ^2}$, $ x = {\beta ^2} $. Let $ \alpha - \beta = \theta \in \gf_{{p^{2n}}}^*  $, then $ \alpha  + \beta  = {\theta ^{ - 1}} $, $ \alpha  = \frac{1}{2}(\theta  + {\theta ^{ - 1}}) $,
	$ \beta  =  - \frac{1}{2}(\theta  - {\theta ^{ - 1}}) $, and $ x = \frac{1}{4}{\left( {\theta  - {\theta ^{ - 1}}} \right)^2} $. We can obtain that $ \theta^{2(p^{n}+1)}=1 $ or $ \theta^{2(p^{n}-1)}=1 $ since $ x \in \gf_{p^{n}} $. 	Note that $ x= 0 $ if and only if $ \theta^{2}=1 $, and $ x= -1 $ if and only if $ \theta^{2}=-1 $.  For $\theta^2\neq \pm1$, we can check that $\pm\theta,\pm\theta^{-1}$ are pairwise distinct. We mention that for $ x\neq 0,-1  $, each $ x $ corresponds to four distinct $ \theta  $'s $ ( \pm \theta,\pm \theta^{-1}) $. Moreover, we have 
	\[\begin{aligned}
		\Delta_{-1} \left( x \right) &= {\alpha ^{{p^k} + 1}} + {\beta ^{{p^k} + 1}}\\
		&= \frac{1}{4}{\left( {\theta  + {\theta ^{ - 1}}} \right)^{{p^k} + 1}} + \frac{1}{4}{\left( {\theta  - {\theta ^{ - 1}}} \right)^{{p^k} + 1}}\\
		&= \frac{1}{2}({\theta ^{{p^k} + 1}} + {\theta ^{ - {p^k} - 1}}).
	\end{aligned}\]	
	Although we can choose different $ \theta $, $ \Delta_{-1} \left( x \right)=\frac{1}{2}({\theta ^{{p^k} + 1}} + {\theta ^{ - {p^k} - 1}}) $ always holds, no matter which $ \theta $ is chosen. To determine the $ (-1) $-differential spectrum of this function, we need to investigate the image of the $\Delta_{-1}(x)$. Let $\mathrm{Im}(\Delta_{-1})\mid_{S}$ denote the image set of the differential function $\Delta_{-1}(x)$ restricted on some set $S$, i.e.,
	$\mathrm{Im}(\Delta_{-1})\mid_{S}=\{\Delta_{-1}(x) : x=\frac{1}{4}(\theta-\theta^{-1})^2, \theta\in S\}$. 
	We define
	\[ {S_1} = \left\{ {\theta : {{\theta ^{2\left( {{p^n} + 1} \right)}} = 1}} \right\}, {S_2} = \left\{ {\theta : {{\theta ^{2\left( {{p^n} - 1} \right)}} = 1}} \right\}.\]

	Then we have 
	\[{S_1} = \left\{ {{\gamma ^i}:{i = \frac{{{p^n} - 1}}{2}j,0 \le j \le 2{p^n} + 1}} \right\}, {S_2} = \left\{ {{\gamma ^i}: {i = \frac{{{p^n} + 1}}{2}j,0 \le j \le 2{p^n} - 3} } \right\}, \]
	where $\gamma$ is a generator of  $ \gf_{{p^{2n}}}^*$. Define
	\[{C_i} = \left\{ {{\theta ^{{p^k} + 1}}: {\theta  \in {S_i}} } \right\}, i=1, 2.\]
	Then, $ {C_1} = \left\langle {{\gamma ^{\gcd (\frac{{({p^n} - 1)({p^k} + 1)}}{2},{p^{2n}} - 1)}}} \right\rangle  $, $ {C_2} = \left\langle {{\gamma ^{\gcd (\frac{{({p^n} + 1)({p^k} + 1)}}{2},{p^{2n}} - 1)}}} \right\rangle$ and $C_1\cap C_2=\{\pm 1\}$. Let $\phi$ be a mapping defined on $C_1\cup C_2$ with $\phi(u)=\frac{1}{2}(u+u^{-1})$. It is easy to see that $\phi$ is a $2$-to-$1$ mapping on $(C_1\cup C_2)\setminus\{\pm 1\}$, and 
	\[\mathrm{Im}(\phi)\mid_{C_1\setminus\{\pm 1\}} \cap \mathrm{Im}(\phi)\mid_{C_2\setminus\{\pm 1\}}=\emptyset.\]
	We consider the following two cases. 
	
	{\it Case 1: }$ n $ is odd.	In this case we have $ \gcd(\frac{{({p^n} - 1)({p^k} + 1)}}{2},p^{2n}-1)=\frac{(p^{n}-1)(p+1)}{2} $ and $ \gcd(\frac{{({p^n} + 1)({p^k} + 1)}}{2},p^{2n}-1)=p^{n}+1 $. Then $ {C_1} = \left\langle {\gamma^{\frac{(p^{n}-1)(p+1)}{2}}} \right\rangle $, $ {C_2} = \left\langle {\gamma^{p^{n}+1}} \right\rangle $. Then the function $\theta^{p^k+1}$ is $(p+1)$-to-1 on $S_1$ and $2$-to-$1$ on $S_2$. 
	
	For any $b\in \mathrm{Im}(\phi)\mid_{C_1\setminus\{\pm 1\}}$, there are two $u$'s in $C_1\setminus\{\pm 1\}$ such that $\phi(u)=b$. Each $u$ corresponds $(p+1)$ $\theta$'s in $S_1$. Hence the $(-1)$-differential equation $\Delta_{-1}(x)=b$ has $\frac{p+1}{2}$ solutions. The number of such $b$ is $\frac{p^n+1}{p+1}-1$ since  $\#\{{C_1\setminus\{\pm 1\}}\}=2(\frac{p^n+1}{p+1}-1)$. For $b\in \mathrm{Im}(\phi)\mid_{C_1\setminus\{\pm 1\}}$, we can discuss it similarly. 
	
	For $b=1$ (respectively, $b=-1$), there is a unique $u=1$ (respectively, $u=-1$) such that $\phi(u)=b$. 
	Each $u$ corresponds $(p+1)$ $\theta$'s in $S_1$ and two $\theta$'s in $S_2$, i.e. $\theta\in \{\theta^{p+1}=1 :\theta\in S_1\}\cup \{\theta^{2}=1 :\theta\in S_2\}$ (respectively, $\theta\in \{\theta^{p+1}=-1 :\theta\in S_1\}\cup \{\theta^{2}=-1 :\theta\in S_2\}$).
	Note that {{four}} of them satisfy $\theta^2=1$ (respectively, $\theta^{2}=-1$), the other $(p-1)$ $\theta$'s do not satisfy $\theta^2=\pm 1$ since $ p \equiv 1\left( {\bmod~4} \right) $. Based on the corresponding relation of $x$ and $\theta$, we conclude that $\Delta_{-1}(x)=1$ has $\frac{p-1}{4}+1$ solutions. Summarizing the discussions as above, we have the following table.
		\begin{table}[!htbp]
		\begin{center}
			\begin{minipage}{250pt}
				\caption{}\label{table-3}
				\begin{tabular}{@{}llcl@{}}
					\toprule
					Set  $\mathbb{S}$&$ \# ~\mathbb{S} $&$ \# \{ {x \in {\gf_{{p^n}}}: {{\Delta _{-1}}(x) = b}.} \}, b\in\mathbb{S} $ \\
					\midrule
					$ \mathrm{Im}(\phi)\mid_{C_1\setminus\{\pm 1\}} $&$\frac{p^{n}-p}{p+1}$&$ \frac{p+1}{2} $\\
					$ \mathrm{Im}(\phi)\mid_{C_2\setminus\{\pm 1\}} $& $\frac{p^{n}-3}{2}$&$ 1 $\\
					$\{1\}$& $1$&$ \frac{p+3}{4} $\\
					$\{-1\}$& $ 1$&$ \frac{p+3}{4} $ \\
					$\gf_{{p^n}}\setminus$ the above & $\frac{{({p^n} + 1)(p-1)}}{2(p+1)}$& $0$\\
					\botrule
				\end{tabular}
			\end{minipage}
		\end{center}
	\end{table}

	{\it Case 2: }$ n $ is even.	
	In this case we have $ \gcd(\frac{(p^n-1)(p^k+1)}{2},p^{2n}-1)=p^{n}-1 $ and $ \gcd(\frac{(p^n+1)(p^k+1)}{2},p^{2n}-1)=\frac{(p^{n}+1)(p+1)}{2} $, then $ {C_1} = \left\langle {\gamma^{p^{n}-1}} \right\rangle $, $ {C_2} = \left\langle {\gamma^{\frac{(p^{n}+1)(p+1)}{2}}} \right\rangle $. Then the function $\theta^{p^k+1}$ is $2$-to-$1$ on $S_1$ and $(p+1)$-to-$1$ on $S_2$. By a similar discussion, we obtain the following table. We finish the proof.
	\begin{table}[!htbp]
		\begin{center}
			\begin{minipage}{250pt}
				\caption{}\label{table-4}
				\begin{tabular}{@{}llcl@{}}
					\toprule
					Set  $\mathbb{S}$&$ \# ~\mathbb{S} $&$ \# \{ {x \in {\gf_{{p^n}}}: {{\Delta _{-1}}(x) = b}.} \}, b\in\mathbb{S} $ \\
					\midrule
					$ \mathrm{Im}(\phi)\mid_{C_1\setminus\{\pm 1\}} $&$\frac{p^{n}-1}{2}$&$ 1 $\\
					$ \mathrm{Im}(\phi)\mid_{C_2\setminus\{\pm 1\}} $& $\frac{p^{n}-p-2}{p+1}$&$ \frac{p+1}{2} $\\
					$\{1\}$& $1$&$ \frac{p+3}{4} $\\
					$\{-1\}$& $ 1$&$ \frac{p+3}{4} $ \\
					$\gf_{{p^n}}\setminus$ the above & $\frac{{({p^n} - 1)(p-1)}}{2(p+1)}$& $0$\\
					\botrule
				\end{tabular}
			\end{minipage}
		\end{center}
	\end{table}
	\end{proof}

\begin{theorem}	\label{poddp+1/2-3}	
	Let $ F(x)=x^{\frac{p^{k}+1}{2}} $ be a power function over $ \gf_{p^{n}}$, where {$ p>7 $}, $ p \equiv 3\left( {\bmod~4} \right) $, $ \gcd(n,k)=1 $, and $ \frac{2n}{\gcd(2n,k)} $ is even. If $ n $ is odd, the $ (-1) $-differential spectrum of $F(x)$ is 
	\[\begin{gathered}
		\mathbb{S} = \left\{ {_{ - 1}{\omega _0} = \frac{{{p^n}(3p - 1) - (p + 5)}}{{4(p + 1)}}{,_{ - 1}}{\omega _2} = \frac{{{p^n} - 3}}{4}{,_{ - 1}}{\omega _{\frac{{p + 1}}{4}}} = 1,} \right. \hfill \\
		~~~~~~{\text{ }}\left. {_{ - 1}{\omega _{\frac{{p + 5}}{4}}} = 1{,_{ - 1}}{\omega _{\frac{{p + 1}}{2}}} = \frac{{{p^n} - p}}{{p + 1}}} \right\}. \hfill \\ 
	\end{gathered} \]
	If $ n $ is even, the $ (-1) $-differential spectrum of $F(x)$ is 
	\[\begin{gathered}
		S = \left\{ {_{ - 1}{\omega _0} = \frac{{({p^n} - 1)(3p - 1)}}{{4(p + 1)}}{,_{ - 1}}{\omega _2} = \frac{{{p^n} - 1}}{4}{,_{ - 1}}{\omega _{\frac{{p + 1}}{4}}} = 1,} \right. \hfill \\
		~~~~~~{\text{ }}\left. {_{ - 1}{\omega _{\frac{{p + 5}}{4}}} = 1{,_{ - 1}}{\omega _{\frac{{p + 1}}{2}}} = \frac{{{p^n} - p - 2}}{{p + 1}}{\text{ }}} \right\}. \hfill \\ 
	\end{gathered} \]
	
\end{theorem}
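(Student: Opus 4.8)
The plan is to run the parametrization in the proof of Theorem~\ref{poddp+1/2-1} essentially verbatim, and then to isolate the two places where $p\equiv3\pmod4$ produces a genuinely different outcome from $p\equiv1\pmod4$: the parity of the orders of the cyclic groups $C_1,C_2$ below, and the position of the fourth roots of unity relative to the fibres over $b=\pm1$. First I would record the reductions. Since $\frac{2n}{\gcd(2n,k)}$ is even and $\gcd(n,k)=1$, the integer $k$ is odd and $\gcd(2n,k)=1$, so by the first lemma of Section~2, $\gcd(p^k+1,p^{2n}-1)=p+1$; moreover $p\equiv3\pmod4$ together with $k$ odd gives $4\mid p^k+1$, so $d=\frac{p^k+1}{2}$ is even. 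For $x\in\gf_{p^n}$ write $x+1=\alpha^2$, $x=\beta^2$ with $\alpha,\beta\in\gf_{p^{2n}}$ and put $\theta=\alpha-\beta\in\gf_{p^{2n}}^*$; then $\alpha+\beta=\theta^{-1}$, $x=\frac14(\theta-\theta^{-1})^2$, and $\Delta_{-1}(x)=\frac12(\theta^{p^k+1}+\theta^{-(p^k+1)})$. As in the earlier proof, $x=0\Leftrightarrow\theta^2=1$, $x=-1\Leftrightarrow\theta^2=-1$, the correspondence $x\leftrightarrow\theta$ is $4$-to-$1$ for $x\ne0,-1$, and $\theta$ ranges over $S_1\cup S_2$ with $S_i$ as defined in the proof of Theorem~\ref{poddp+1/2-1}. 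Put $C_i=\{\theta^{p^k+1}:\theta\in S_i\}$.

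Next I would compute the fibre sizes of $\theta\mapsto\theta^{p^k+1}$ on $S_1$ and on $S_2$, using the first lemma together with elementary $2$-adic valuations of $p^k\pm1$ and $p^n\pm1$; this is where the hypothesis $p\equiv3\pmod4$ is used. For $n$ odd one finds the map is $(p+1)$-to-$1$ on $S_1$ and $4$-to-$1$ on $S_2$, so $|C_1|=\frac{2(p^n+1)}{p+1}$, which is even (hence $-1\in C_1$), while $|C_2|=\frac{p^n-1}{2}$, which is odd (hence $-1\notin C_2$); for $n$ even the roles of $S_1$ and $S_2$ are interchanged. That one of $C_1,C_2$ has odd order — in contrast with the $p\equiv1$ case, where both orders are even — is the structural reason why the spectrum now has two distinct endpoint multiplicities rather than one. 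Then, exactly as in Theorem~\ref{poddp+1/2-1}, $\phi(u)=\frac12(u+u^{-1})$ is $2$-to-$1$ on $(C_1\cup C_2)\setminus\{\pm1\}$, with $C_1\cap C_2=\{1\}$ and $\mathrm{Im}(\phi)\mid_{C_1\setminus\{\pm1\}}\cap\mathrm{Im}(\phi)\mid_{C_2\setminus\{\pm1\}}=\emptyset$; tracing preimages back through the $4$-to-$1$ correspondence gives $\delta_{-1}(b)=\frac{p+1}{2}$ for each of the $\frac{p^n-p}{p+1}$ (resp. $\frac{p^n-p-2}{p+1}$) values $b$ in the image of the ``$(p+1)$-to-$1$'' group, and $\delta_{-1}(b)=2$ for each of the remaining $\frac{p^n-3}{4}$ (resp. $\frac{p^n-1}{4}$) values coming from the other group.

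It then remains to treat $b=\pm1$, where $\phi^{-1}(b)$ is a single point and the two endpoint multiplicities arise. For $b=1$, the set $\{\theta\in S_1\cup S_2:\theta^{p^k+1}=1\}$ is the union of the subgroups $K_i=\{\theta\in S_i:\theta^{p^k+1}=1\}$; since $S_1\cap S_2=\{\theta:\theta^4=1\}$ and $4\mid p^k+1$, the smaller $K_i$ lies inside the larger, so this set has $p+1$ elements, of which $\theta=\pm1$ produce $x=0$ and $\theta=\pm i$ (with $i^2=-1$) produce $x=-1$. As $\Delta_{-1}(0)=1$ and $\Delta_{-1}(-1)=(-1)^d=1$ (recall $d$ is even), both $x=0$ and $x=-1$ solve $\Delta_{-1}(x)=1$, so together with the $\frac{p-3}{4}$ values of $x$ arising $4$-to-$1$ from the other $p-3$ choices of $\theta$, we get $\delta_{-1}(1)=\frac{p+5}{4}$. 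For $b=-1$, whichever of $C_1,C_2$ has odd order does not contain $-1$, so every $\theta$ with $\theta^{p^k+1}=-1$ lies in the other $S_i$; there are $p+1$ such $\theta$, none with $\theta^2=\pm1$ (because $i^{p^k+1}=1\ne-1$), and neither $x=0$ nor $x=-1$ solves $\Delta_{-1}(x)=-1$, so $\delta_{-1}(-1)=\frac{p+1}{4}$. Finally, ${}_{-1}\omega_2$ and ${}_{-1}\omega_0$ are forced by (\ref{omegaiomega}), and the hypothesis $p>7$ guarantees that $\frac{p+1}{4}$, $\frac{p+5}{4}$, $\frac{p+1}{2}$ are pairwise distinct and distinct from $0$ and $2$, so the displayed multi-set is exactly the $(-1)$-differential spectrum; the $n$ even case is identical after exchanging $S_1$ and $S_2$.

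I expect the main difficulty to be the bookkeeping at $b=\pm1$: pinning down exactly how many of the $p+1$ preimages $\theta$ satisfy $\theta^2=1$ or $\theta^2=-1$, which requires disentangling the overlaps $S_1\cap S_2$ and $K_1\cap K_2$, and correctly deciding whether $x=0$ and $x=-1$ are themselves solutions of $\Delta_{-1}(x)=b$. It is precisely here that $p\equiv3\pmod4$ breaks the $b=1\leftrightarrow b=-1$ symmetry that holds when $p\equiv1\pmod4$. The supporting $2$-adic valuation computations (which yield $|C_2|$ odd when $n$ is odd and $|C_1|$ odd when $n$ is even) are routine but must be carried out carefully, since the entire shape of the spectrum rests on them.
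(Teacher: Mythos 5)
Your proposal is correct and follows exactly the route the paper intends: the paper omits this proof, stating only that it is ``similar to that of Theorem \ref{poddp+1/2-1}'', and your argument is precisely that adaptation, with the decisive computations done right (the map $\theta\mapsto\theta^{p^k+1}$ being $(p+1)$-to-$1$ on one of $S_1,S_2$ and $4$-to-$1$ on the other, the odd order of the corresponding $C_i$ forcing $-1\notin C_i$, the counts $\delta_{-1}(1)=\frac{p+5}{4}$ and $\delta_{-1}(-1)=\frac{p+1}{4}$ via $d$ even and $4\mid p^k+1$, and ${}_{-1}\omega_0$ from (\ref{omegaiomega})). The stated multiplicities all check against $\sum_i{}_{-1}\omega_i=\sum_i i\cdot{}_{-1}\omega_i=p^n$, and your observation about the role of $p>7$ (keeping $\frac{p+1}{4}$, $\frac{p+5}{4}$, $\frac{p+1}{2}$ distinct from $0$ and $2$) is exactly why that hypothesis appears.
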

The proof is similar to that of Theorem \ref{poddp+1/2-1} and we omit it.

\section{A new class of AP$c$N power permutations and
	their $c$-differential spectra}

Very recently, the usual differential properties of the power permutation $F(x)=x^{\frac{5^n-3}{2}}$ over $\gf_{{5^n}}$ were studied in \cite{YHDLCM}. In this section, we prove that  $F(x)$ is AP$c$N when $c=-1$. The $(-1)$-differential spectrum of $F(x)$ is also given. First,  we investigate the $(-1)$-differential uniformity of $ x^{\frac{5^n-3}{2}} $. We introduce the following lemma. The proof is very similar to that of Lemma $ 3 $ in \cite{YHDLCM} and we omit it.

\begin{lemma}\label{p=5lemma}
	Let $ b \in {\gf_{{5^n}}}\backslash \left\{ { \pm 1} \right\}  $ be a nonzero nonsquare element. If both of the two quadratic equations $ x^2+x-b^{-1}=0 $ and $ y^2+y+b^{-1}=0 $ have solutions in $ \gf_{{5^n}} $, then the solution $ z \in \gf_{{5^n}} $ of the quadratic equation
		\[z^2 + (1-2b^{-1})z - b^{-1}=0\]
	satisfies $ \chi(z(z+1))=-1 $.
\end{lemma}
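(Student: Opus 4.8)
The plan is to translate the three quadratics into their discriminants and then exploit the identity $4\equiv -1\pmod 5$. Put $t=b^{-1}$. Since $b$ is a nonzero nonsquare, so is $t$, whence $\chi(t)=-1$; also $t\neq\pm1$ because $b\neq\pm1$. Working in characteristic $5$ we have $1+4t=1-t$ and $1-4t=1+t$, so by Lemma~\ref{quadraticequationsolution}(ii) the hypothesis that $x^{2}+x-b^{-1}=0$ and $y^{2}+y+b^{-1}=0$ both have solutions in $\gf_{5^n}$ says precisely that $1-t$ and $1+t$ are squares; as $t\neq\pm1$ they are \emph{nonzero} squares, so I may fix $\rho,\sigma\in\gf_{5^n}^{*}$ with $1-t=\rho^{2}$ and $1+t=\sigma^{2}$.

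First I would settle the existence of $z$: the discriminant of $z^{2}+(1-2t)z-t=0$ is $(1-2t)^{2}+4t=1+4t^{2}=1-t^{2}=(1-t)(1+t)=(\rho\sigma)^{2}$, a nonzero square, so the equation has solutions and $2z+1=2t\pm\rho\sigma$, the choice of sign being immaterial since both roots will be handled symmetrically. Next, instead of solving for $z$, I would read off from $z^{2}=(2t-1)z+t$ that $z(z+1)=z^{2}+z=t(2z+1)$; therefore $\chi(z(z+1))=\chi(t)\,\chi(2z+1)=-\chi(2z+1)$, and the whole problem reduces to showing $\chi(2z+1)=1$.

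The key computation is then the following. Writing $2t=(1+t)-(1-t)=\sigma^{2}-\rho^{2}$ and using $4\equiv-1$, $-4\equiv1\pmod 5$, one gets
\[2z+1=2t\pm\rho\sigma=\sigma^{2}\pm\rho\sigma-\rho^{2}=(\sigma\mp 2\rho)^{2},\]
a perfect square in $\gf_{5^n}$. It is nonzero, since the product of the two candidate values of $2z+1$ equals $(2t)^{2}-(\rho\sigma)^{2}=4t^{2}-(1-t^{2})=5t^{2}-1=-1\neq 0$; hence $\chi(2z+1)=1$ and $\chi(z(z+1))=-1$, which is the assertion.

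I expect the only real obstacle to be recognizing that $\sigma^{2}\pm\rho\sigma-\rho^{2}$ is a square, the point being that its discriminant as a quadratic in $\sigma$ is $\rho^{2}+4\rho^{2}=5\rho^{2}=0$ in characteristic $5$; everything else is routine algebra. I would also record the small sanity check that $\chi(-1)=1$ in $\gf_{5^n}$ (because $5^{n}\equiv1\pmod 4$), which ensures the two roots $z$ give the same value of $\chi(z(z+1))$ and so justifies speaking of ``the solution $z$''.
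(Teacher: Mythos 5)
Your proof is correct, and it is worth noting that the paper itself does not supply one: it only remarks that the argument is ``very similar to that of Lemma 3 in [YHDLCM]'' and omits it, so your write-up is a genuine addition rather than a paraphrase. Each step checks out: with $t=b^{-1}$ a nonsquare and $t\neq\pm1$, the discriminants of the three quadratics are indeed $1-t$, $1+t$ and $1-t^{2}$ in characteristic $5$, so the hypothesis gives $1-t=\rho^{2}$, $1+t=\sigma^{2}$ with $\rho\sigma\neq0$ and guarantees the existence of the two roots $z$. The reduction $z(z+1)=t(2z+1)$, hence $\chi(z(z+1))=-\chi(2z+1)$, is the right move, and the identity $2z+1=\sigma^{2}\pm\rho\sigma-\rho^{2}=(\sigma\mp2\rho)^{2}$ (valid because $4\equiv-1$, $-4\equiv1\pmod 5$) together with $(2z_{1}+1)(2z_{2}+1)=4z_{1}z_{2}+2(z_{1}+z_{2})+1=-4t+(4t-2)+1=-1\neq0$ shows $2z+1$ is a nonzero square for both roots, completing the argument. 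Two small presentational points: the lemma's phrase ``the solution $z$'' really means ``every solution $z$'', which your symmetric treatment of the two roots handles, and the remark about $\chi(-1)=1$ is a harmless consistency check rather than a needed ingredient, since the conclusion for each root already follows from $2z+1$ being a nonzero square.
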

Based on the above lemma, we have the following theorem.
\begin{theorem} \label{p=5theorem}
	Denote by $ _{-1}\Delta_{F} $ the $ (-1) $-differential uniformity of $ F(x)=x^d $ over $ \gf_{{5^n}} $, where $ d=\frac{5^n-3}{2} $. We have $ _{-1}\Delta_{F}=2 $.
	
\end{theorem}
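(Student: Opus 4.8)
The plan is to reduce the claim $_{-1}\Delta_F = 2$ to counting solutions of the $(-1)$-differential equation
\begin{equation*}
\Delta_{-1}(x) = (x+1)^{d} + x^{d} = b, \qquad d = \frac{5^n - 3}{2},
\end{equation*}
for each $b \in \gf_{5^n}$, and to show this number never exceeds $2$. The first step is to rewrite $x^{d}$ in closed form using the quadratic character: since $d = \frac{5^n - 3}{2} = \frac{5^n - 1}{2} - 1$, for $x \ne 0$ we have $x^{d} = \chi(x)\, x^{-1}$, where $\chi$ is the quadratic character on $\gf_{5^n}$ (using $x^{(5^n-1)/2} = \chi(x)$). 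Thus for $x \ne 0, -1$ the equation becomes $\chi(x+1)(x+1)^{-1} + \chi(x)\, x^{-1} = b$, and the analysis splits according to the four sign patterns $(\chi(x), \chi(x+1)) \in \{\pm 1\}^2$, i.e.\ over the sets $S_{i,j}$ from the preliminaries; the boundary cases $x = 0$, $x = -1$, and $b = 0$ are handled separately and contribute only a bounded number of solutions.

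On each set $S_{i,j}$, clearing denominators turns $\Delta_{-1}(x) = b$ into a quadratic (or linear) equation in $x$; concretely one gets, up to the sign pattern, equations of the shape $bx^2 + (\text{linear})x + (\text{const}) = 0$. By Lemma \ref{quadraticequationsolution}(ii) each such quadratic has at most two roots in $\gf_{5^n}$, but a priori the four sign patterns could together furnish more than two genuine solutions of $\Delta_{-1}(x) = b$. The crux is therefore to rule out the possibility of three or more solutions. The key structural input is the symmetry $x \mapsto -1-x$: since $(-1-x) + 1 = -x$ and $\chi(-1) = \chi(4) = 1$ (as $4 = 2^2$ and also $5^n \equiv 1 \pmod 4$, so $-1$ is a square in $\gf_{5^n}$ — wait, $5 \equiv 1 \pmod 4$ hence $-1$ is always a square), we get $\Delta_{-1}(-1-x) = \Delta_{-1}(x)$, so solutions come in pairs $\{x, -1-x\}$ unless $x = -1-x$, i.e.\ $x = 2 = -3$ (the midpoint), which forces a single distinguished value of $b$. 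Hence for the generic $b$ the solution set is a union of such pairs, and showing $_{-1}\Delta_F = 2$ amounts to excluding the existence of two distinct such pairs for one $b$; equivalently, after the substitution $z = x + 3$ (centering the involution at $0$), the problem becomes an even-in-$z$ condition, and one must show the resulting equation in $z^2$ has at most one relevant root.

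This is exactly where Lemma \ref{p=5lemma} enters, and I expect it to be the main obstacle to package cleanly. Suppose $b$ is a nonzero nonsquare in $\gf_{5^n}\setminus\{\pm1\}$ and that $\Delta_{-1}(x) = b$ had solutions coming from two different sign patterns among the $S_{i,j}$; tracing through the cleared-denominator quadratics, having solutions in the ``$\chi(x) = -1$'' regime forces solvability of $x^2 + x - b^{-1} = 0$, while the ``$\chi(x+1) = -1$'' regime (via $x \mapsto -1-x$) forces solvability of $y^2 + y + b^{-1} = 0$. Lemma \ref{p=5lemma} then says the auxiliary $z$ with $z^2 + (1 - 2b^{-1})z - b^{-1} = 0$ satisfies $\chi(z(z+1)) = -1$, i.e.\ $z$ and $z+1$ have opposite quadratic characters — but $z$ is precisely the extra candidate solution that would push the count to $3$ or $4$, and the constraint $\chi(z(z+1)) = -1$ places it in $S_{1,-1} \cup S_{-1,1}$, which is incompatible with it satisfying the particular quadratic attached to its sign pattern. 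This contradiction kills the $\ge 3$ case. The remaining work is bookkeeping: verify that the number of solutions actually attains $2$ (so $_{-1}\Delta_F \ne 1$, i.e.\ $F$ is genuinely AP$c$N and not P$c$N) by exhibiting a $b$ with exactly two solutions, and check the finitely many special values $b \in \{0, 1, -1\}$ and the midpoint value of $b$ directly. The delicate point throughout is correctly matching each cleared-denominator quadratic with the sign pattern that legitimizes its roots, since a root of the quadratic is a true solution of $\Delta_{-1}(x) = b$ only if it lies in the correct $S_{i,j}$; Lemma \ref{p=5lemma} is the tool that makes these compatibility checks collapse to the single clean statement $_{-1}\Delta_F = 2$.
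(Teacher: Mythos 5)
Your overall skeleton --- writing $x^d=\chi(x)x^{-1}$, splitting into the four sign patterns $S_{i,j}$, and invoking Lemma \ref{p=5lemma} to rule out coexisting solutions from incompatible cases --- does match the paper's proof. But the structural claim you build the count on is false: since $d=\frac{5^n-3}{2}$ is odd (because $5^n\equiv 1\pmod 4$), we have $\Delta_{-1}(-1-x)=(-x)^d+(-(x+1))^d=(-1)^d\Delta_{-1}(x)=-\Delta_{-1}(x)$, not $+\Delta_{-1}(x)$. (You checked $\chi(-1)=1$, but that is irrelevant to the integer parity of $d$, which is what controls the sign $(-1)^d$.) So the involution $x\mapsto -1-x$ sends the fiber over $b$ to the fiber over $-b$; it yields only $\delta_{-1}(b)=\delta_{-1}(-b)$, and your plan of pairing solutions within a single fiber, substituting $z=x+3$, and reducing to an equation in $z^2$ collapses. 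Consistently, the fixed point $x=-\frac12=2$ lies over $b=0$, the unique $b$ with $b=-b$, and the paper handles $b=0$ separately via $\gcd(d,5^n-1)=1$, getting $\delta_{-1}(0)=1$.

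The involution does enter the actual argument, but at the level of the four cleared-denominator quadratics for one fixed $b$: the map $x\mapsto -1-x$ carries the roots $x_1,x_2$ of the Case I equation $x^2+(1-2b^{-1})x-b^{-1}=0$ to the roots $x_3=-(x_2+1)$, $x_4=-(x_1+1)$ of the Case IV equation $x^2+(1+2b^{-1})x+b^{-1}=0$; since a genuine Case I solution has $\chi(x_i+1)=1$ while a genuine Case IV solution needs $\chi(x_j)=\chi(x_{i}+1)=-1$, at most two of these four candidates can be legitimate. You also omit the dichotomy on $\chi(b)$ that organizes the rest: when $b$ is a square, Cases II and III are empty outright because the product of the roots of $x^2+x\mp b^{-1}=0$ equals $\mp b^{-1}$, forcing $\chi(x(x+1))=-1$; when $b$ is a nonsquare, each case contributes at most one root (again by the product of the roots), and only here is Lemma \ref{p=5lemma} needed, to show Case III cannot coexist with Case I or IV. (Minor slip: you have Cases II and III swapped --- the regime $\chi(x)=-1$, $\chi(x+1)=1$ yields $x^2+x+b^{-1}=0$, not $x^2+x-b^{-1}=0$.) The special values $b=0,\pm1$ are then checked by hand, as you anticipated.
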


\begin{proof}
	For $ b \in \gf_{{5^n}} $, we consider the $(-1)$-differential equation \begin{equation} \label{p=5main}
		(x+1)^{d}+x^{d}=b
	\end{equation}
	over $ \gf_{{5^n}} $. It is obvious that when $b=0$, (\ref{p=5main}) has a unique solution $x=-\frac{1}{2}$ since $d$ is odd and  $ \gcd(d,5^n-1)=1 $.	In the following, we may assume that $b\neq 0$. Recall that $ \chi $ denotes the quadratic multiplicative character of $ \gf_{{5^n}} $. For all $x\in\gf_{{p^n}}^\# $, (\ref{p=5main}) becomes 
	\begin{equation}\label{p=5simplification}
		\chi(x+1)(x+1)^{-1}+\chi(x)x^{-1}=b.
	\end{equation}
	Depending on the values of $ \chi(x)$ and  $\chi(x+1) $,  we have four quadratic equations. By solving these four equations, we obtain Table \ref{table5}.
	\begin{table}[!htbp]\label{table5}
		\begin{center}
			\begin{minipage}{4250pt}
				\caption{Simplification of (\ref{p=5simplification}) in four cases}\label{table-5}
				\begin{tabular}{@{}llllll@{}}
					\toprule
					Case &  $ \chi \left( x \right) $ & $ \chi \left( x+1 \right) $ & Equation & $ x_1, x_2 $ & $ x_{1}x_{2} $  \\
					\midrule
					$ \rm\uppercase\expandafter{\romannumeral1} $ &  1 & 1 & $ x^2+(1-2b^{-1})x-b^{-1}=0 $ & $ \frac{-(1-2b^{-1})\pm \sqrt{1-b^{-2}}}{2} $ & $ -b^{-1} $  \\
					$ \rm\uppercase\expandafter{\romannumeral2} $ & 1 & -1 & $ x^2 + x - b^{-1} = 0 $ & $ \frac{-1\pm\sqrt{1-b^{-1}}}{2} $ & $ -b^{-1} $  \\
					$ \rm\uppercase\expandafter{\romannumeral3} $ &  -1 & 1 & $ x^2 + x + b^{-1} = 0 $ & $ \frac{-1\pm\sqrt{1+b^{-1}}}{2} $ & $ -b^{-1} $  \\
					$ \rm\uppercase\expandafter{\romannumeral4} $ &  $-1$ & $-1$ & $  x^2+(1+2b^{-1})x+b^{-1}=0 $ & $ \frac{-(1+2b^{-1})\pm \sqrt{1-b^{-2}}}{2} $ & $ b^{-1} $  \\
			
					\botrule
				\end{tabular}
			\end{minipage}
		\end{center}
	\end{table}

	Next we discuss the possible solutions of (\ref{p=5simplification}) for a given $b\neq \pm 1$. Note that $ \chi(-1)=1 $, if $ b $ is a square element, then $\chi(-b^{-1})=\chi(b^{-1})=1 $. In both Cases $ \rm\uppercase\expandafter{\romannumeral2} $ and $ \rm\uppercase\expandafter{\romannumeral3} $, if $ x $ is a solution, then $ \chi(x(x+1))=-1 $. So (\ref{p=5simplification}) has no solution in $S_{1, -1}$ and $S_{-1, 1}$. Now we consider Cases $ \rm\uppercase\expandafter{\romannumeral1} $ and $ \rm\uppercase\expandafter{\romannumeral4} $. Let $ x_{1,2}= \frac{-(1-2b^{-1}) \pm \sqrt{1-b^{-2}}}{2}  $ and $ x_{3,4}=\frac{-(1+2b^{-1}) \pm \sqrt{1-b^{-2}}}{2} $ be solutions of $ x^2+(1-2b^{-1})x-b^{-1}=0 $ and $ x^2+(1+2b^{-1})x+b^{-1}=0 $, respectively. Then $x_3=-(x_2+1)$ and $x_4=-(x_1+1)$. We assert that the number of solutions of (\ref{p=5simplification}) in Cases I and IV is at most two since 
	$\chi(x_1+1)=\chi(x_2+1)=1$ but $\chi(x_3)=\chi(x_4)=-1$. Then (\ref{p=5simplification}) has at most two solutions when $b$ is a square element. 
	
	If $ b $ is a nonsquare element, then $\chi(-b^{-1})=\chi(b^{-1})=-1 $. We also consider the solutions of (\ref{p=5simplification}) in each case. In Case $ \rm\uppercase\expandafter{\romannumeral1} $, the product of two solutions of equation $ x^2+(1-2b^{-1})x-b^{-1}=0 $ is $ -b^{-1} $, which is a nonsquare element, this means (\ref{p=5simplification}) has at most one solution in $ S_{1,1} $. Similarly, we can prove that (\ref{p=5simplification}) has at most one solution in $ S_{-1,-1} $. Now we consider Case $ \rm\uppercase\expandafter{\romannumeral2} $. Let $ x_{5} $ and $ -x_{5}-1 $ be the two solutions of the quadratic equation $ x^2+x-b^{-1}=0 $. It is easy to check that $ x_{5} \in S_{1,-1} $ if and only if $ -x_{5}-1 \in S_{1,-1} $. This implies that (\ref{p=5simplification}) has at most one solution in $ S_{1,-1} $. Similarly, we can prove that (\ref{p=5simplification}) has at most one solution in $ S_{-1,1} $. If (\ref{p=5simplification}) has solutions in Cases $ \rm\uppercase\expandafter{\romannumeral1} $ and $ \rm\uppercase\expandafter{\romannumeral3} $ simultaneously, then the discriminates of quadratic euqations $ x^2+(1-2b^{-1})x-b^{-1}=0 $ and $ x^2 + x + b^{-1} = 0 $ are both square elements, i.e., $ \chi(1-b^{-2})=\chi(1+b^{-1})=1 $. Then $ \chi(1-b^{-1})=1 $, which implies that the equation $ x^{2}+x+b^{-1}=0 $ has two solutions in $ \gf_{{5^n}} $. This contradicts Lemma \ref{p=5lemma}, since the solution $ x $ in Case $  \rm\uppercase\expandafter{\romannumeral1} $ satisfies $ \chi(x(x+1))=1 $. Then $ (\ref{p=5simplification}) $ cannot have solutions in Cases $ \rm\uppercase\expandafter{\romannumeral1} $ and $ \rm\uppercase\expandafter{\romannumeral3} $ simultaneously. Similarly, we can prove that (\ref{p=5simplification}) has no solution in Cases $ \rm\uppercase\expandafter{\romannumeral4} $ and $ \rm\uppercase\expandafter{\romannumeral3} $ simultaneously. Since Case $ \rm\uppercase\expandafter{\romannumeral2} $ has solutions if and only if Case $ \rm\uppercase\expandafter{\romannumeral3} $ has solutions, we conclude that for $ b \in {\gf_{{5^n}}}\backslash \left\{ { \pm 1} \right\}  $ (\ref{p=5main}) has solutions in at most two cases, i.e., $ \delta_{-1}(b) \le 2 $.

	For $ b=1 $, one can easy check that $ x=0 $ is a solution of (\ref{p=5main}). Then we consider (\ref{p=5simplification}) in the four cases. It is obvious that Cases $ \rm\uppercase\expandafter{\romannumeral2} $ and $ \rm\uppercase\expandafter{\romannumeral3} $ has no solution. In Case $ \rm\uppercase\expandafter{\romannumeral1} $, (\ref{p=5simplification}) becomes $ x^2-x-1=0 $. This quadratic equation has a unqiue solution $ x=-2 $. We know that $ \chi(-2)=-1 $ for odd $ n $ and $ \chi(-2) = 1 $ for even $ n $. Then Case $ \rm\uppercase\expandafter{\romannumeral1} $ has one solution when $ n $ is even and no solution when $ n $ is odd. In Case $ \rm\uppercase\expandafter{\romannumeral4} $, (\ref{p=5simplification}) becomes $ x^2-2x+1=0 $. This quadratic equation has a unique solution $ x=1 $. Therefore (\ref{p=5simplification}) has no solution in $ S_{-1,-1} $ when $ b=1 $. We conclude that $ \delta_{-1}(1)=2 $ for even $ n $ and $ \delta_{-1}(1)=1 $ for odd $ n $. 
	For $ b=-1 $, we can similarly obtain that $ \delta_{-1}(-1)=2 $ for even $ n $ and $ \delta_{-1}(-1)=1 $ for odd $ n $. The proof is finished.
\end{proof}


Next we will determine the $ (-1) $-differential spectrum of $ F(x)=x^{\frac{5^n-3}{2}} $. It is sufficient to determine  $ {}_{-1}N_{4} $, which denotes the number of solutions in $ (\gf_{{5^n}} )^4 $ of the equation system (\ref{equationsystem}) when $ c=-1 $. Moreover, we need the value of the character sum 
\[ {\Gamma _{5,n}}=\sum_{x\in\gf_{5^n}}\chi{(x(x-1)(x+1))},\]
where $\chi$ is the quadratic multiplicative character over $\gf_{{5^n}}$. It was proved in \cite{YHDLCM} that  
\begin{equation}\label{Gamma_{(5,n)}}
{\Gamma _{5,n}} = {\left( { - 1} \right)^{n + 1}}\sum\limits_{k = 0}^{\left\lfloor {\frac{n}{2}} \right\rfloor } {{{\left( { - 1} \right)}^k} \binom{n}{2k} {2^{2k + 1}}},
\end{equation}
which is always an integer. We have the following lemma on the value of $ {}_{-1}N_{4} $,
which is the number of the solutions $(x_1,x_2,x_3,x_4)\in(\gf_{{5^n}})^4$ of the equation system
\begin{equation}
\Bigg\{ \begin{array}{l}
	{x_1} - {x_2} + {x_3} - {x_4} = 0\\
	x_1^d + x_2^d - x_3^d - x_4^d = 0
\end{array}. 
\end{equation} 

\begin{lemma}\label{N_4value}
We have \[{{}_{-1}N_4} = \left\{ \begin{array}{l}
	1 + 4\left( {{5^n} - 1} \right) + ({5^n} - 1)\left( { - \frac{1}{4}{\Gamma _{5,n}} + \frac{{7 \cdot {5^n} - 17}}{4}} \right),~\mathrm{if}~n~\mathrm{is}~\mathrm{even},\\
	1 + 2\left( {{5^n} - 1} \right) + ({5^n} - 1)\left( { - \frac{1}{4}{\Gamma _{5,n}} + \frac{{7 \cdot {5^n} - 17}}{4}} \right),~\mathrm{if}~n~\mathrm{is}~\mathrm{odd},
\end{array} \right.\]
where $ \Gamma_{5,n} $ was given in (\ref{Gamma_{(5,n)}}).
\end{lemma}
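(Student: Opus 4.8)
The plan is to express ${}_{-1}N_4$ in terms of the $(-1)$-differential spectrum of $F$, reduce the problem to counting the values $b$ admitting exactly two solutions, and then evaluate that count by a quadratic-character-sum argument in which $\Gamma_{5,n}$ appears.

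First I would use the decomposition of $N_4$ obtained in the proof of Theorem~\ref{properties1}. Since here $e=\gcd(d,5^n-1)=1$, that computation specializes to
\[
{}_{-1}N_4 \;=\; 5^n \;+\; (5^n-1)\sum_{b\in\gf_{5^n}}\bigl(\delta_{-1}(b)\bigr)^2 .
\]
By Theorem~\ref{p=5theorem} we have ${}_{-1}\Delta_F=2$, hence $\delta_{-1}(b)\in\{0,1,2\}$ and $\sum_b(\delta_{-1}(b))^2=\sum_b\delta_{-1}(b)+2\,{}_{-1}\omega_2=5^n+2\,{}_{-1}\omega_2$, where ${}_{-1}\omega_2=\#\{b:\delta_{-1}(b)=2\}$. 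Therefore ${}_{-1}N_4=(5^n)^2+2(5^n-1)\,{}_{-1}\omega_2$, and the whole lemma comes down to determining ${}_{-1}\omega_2$.

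To compute ${}_{-1}\omega_2$ I would sharpen the case analysis in the proof of Theorem~\ref{p=5theorem}. For $b\notin\{0,\pm1\}$ the equation $\Delta_{-1}(x)=b$ splits, according to $(\chi(x),\chi(x+1))$, into the four quadratics of Table~\ref{table-5}; using the recorded products of roots, the relation $x\mapsto -x-1$ which interchanges the roots of Cases I and IV (and those of Cases II and III), and Lemma~\ref{p=5lemma} to forbid simultaneous solutions in Cases I/III and IV/III, one can read off for each such $b$ the exact value $\delta_{-1}(b)$ as an explicit polynomial in the quadratic characters $\chi(b)$, $\chi(1-b^{-1})$, $\chi(1+b^{-1})$ and $\chi(1-b^{-2})$. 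Rewriting $\chi(1-b^{-2})=\chi\bigl((b-1)(b+1)\bigr)$, $\chi(1-b^{-1})=\chi\bigl(b(b-1)\bigr)$ and $\chi(1+b^{-1})=\chi\bigl(b(b+1)\bigr)$, summing the indicator of $\{\delta_{-1}(b)=2\}$ over $b\neq0,\pm1$, and adding the contribution of $b=\pm1$ — which is $2$ if $n$ is even and $0$ if $n$ is odd, the parity entering through $\chi(2)=(-1)^n$ — one obtains ${}_{-1}\omega_2$ as a combination of the constant $5^n$, several quadratic character sums (each equal to $\pm1$ by Lemma~\ref{charactersumquadratic}), and the cubic sum $\sum_b\chi\bigl(b(b-1)(b+1)\bigr)=\Gamma_{5,n}$. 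Collecting terms should give ${}_{-1}\omega_2=\frac{3\cdot5^n-5-\Gamma_{5,n}}{8}$ when $n$ is even and ${}_{-1}\omega_2=\frac{3\cdot5^n-13-\Gamma_{5,n}}{8}$ when $n$ is odd; substituting into ${}_{-1}N_4=(5^n)^2+2(5^n-1)\,{}_{-1}\omega_2$ and simplifying (using $5^{2n}=1+(5^n-1)(5^n+1)$) yields the two stated expressions.

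The main obstacle is the bookkeeping in the second step: upgrading the qualitative bound "$\delta_{-1}(b)\le2$" of Theorem~\ref{p=5theorem} to an exact count for \emph{every} $b$, that is, tracking precisely which root of each of the four quadratics lies in which of the sets $S_{i,j}$, and assembling the answer into a single polynomial in quadratic characters whose only non-elementary ingredient is $\Gamma_{5,n}$. A secondary care point is the separate treatment of the degenerate values ($b=0,\pm1$ and the $b$'s for which a discriminant vanishes) and the parity-of-$n$ split, so that the constants $-5$ versus $-13$ come out correctly.
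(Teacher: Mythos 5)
Your opening reduction is correct and is a genuinely different route from the paper's: with $e=\gcd(d,5^n-1)=1$ the identity established in the proof of Theorem \ref{properties1} gives ${}_{-1}N_4=5^n+(5^n-1)\sum_{b}\delta_{-1}(b)^2$, and since ${}_{-1}\Delta_F=2$ by Theorem \ref{p=5theorem} this equals $5^{2n}+2(5^n-1)\,{}_{-1}\omega_2$; the values of ${}_{-1}\omega_2$ you aim for do reproduce the two stated formulas and agree with Theorem \ref{p=5spectrum}. Note that this inverts the paper's logic: the paper computes ${}_{-1}N_4$ directly (handling zero coordinates via $\delta_{-1}(1)$, then reducing to the three-variable system $y_1+y_2+y_3+1=0$, $y_1^d-y_2^d-y_3^d+1=0$ and splitting into eight cases according to $(\chi(y_1),\chi(y_2),\chi(y_3))$, where $\Gamma_{5,n}$ emerges from the character sums), and only afterwards extracts the spectrum; you propose to determine ${}_{-1}\omega_2$ first and deduce ${}_{-1}N_4$.

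The gap is in your central step. You assert that the case analysis behind Theorem \ref{p=5theorem} can simply be ``read off'' to give $\delta_{-1}(b)$ as an explicit polynomial in $\chi(b)$, $\chi(1-b^{-1})$, $\chi(1+b^{-1})$, $\chi(1-b^{-2})$, calling the rest bookkeeping. It is not: whether a root of one of the quadratics of Table \ref{table-5} lies in the required set $S_{i,j}$ is a condition on $\chi$ evaluated at the root itself, which is a quadratic irrationality in $b$, and Table \ref{table-5} together with Lemma \ref{p=5lemma} and the involution $x\mapsto -x-1$ only yields the bound $\delta_{-1}(b)\le 2$, not exact counts. The pointwise statement you need does hold, but it requires a further idea which your sketch does not supply: for a root $x$ of the Case I (resp. IV) quadratic one has $x(x+1)=b^{-1}(2x+1)$ (resp. $-b^{-1}(2x+1)$) where $u=2x+1$ satisfies $u-u^{-1}=\mp b^{-1}$, so one must show $\chi(u)$ is determined by rational character data of $b$; writing $b^{-1}=\frac{1-s^2}{s}$ this works only because of the characteristic-$5$ identities $s^2+s-1=(s+3)^2$ and $s^2-s-1=(s+2)^2$, which give, e.g., $\chi(1-b^{-1})=\chi(s)$. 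Without such an argument (or an equivalent character-sum computation of the same weight as the paper's proof of Lemma \ref{N_4value}), the passage from ``$\delta_{-1}(b)\le 2$'' to the exact values ${}_{-1}\omega_2=\frac{3\cdot 5^n-5-\Gamma_{5,n}}{8}$ ($n$ even) and $\frac{3\cdot 5^n-13-\Gamma_{5,n}}{8}$ ($n$ odd) is unproved, and that is where essentially all the content of the lemma lies.
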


\begin{proof}Since $ d $ is odd, the number of solutions $ (x_1,x_2,x_3,x_4) \in (\gf_{{5^n}} )^4 $ of the equation system 
\begin{equation}\label{equationsystemmain}
	\left\{ \begin{array}{l}
		{x_1} + {x_2} + {x_3} + {x_4} = 0\\
		{x^d_1} - {x^d_2} - {x^d_3} + {x^d_4} = 0
	\end{array} \right.
\end{equation}
is also $ {}_{-1}N_{4} $.
For a solution $ (x_1,x_2,x_3,x_4) $ of (\ref{equationsystemmain}), first we consider that there exists $ x_i=0 $ for some $ 0 \le i \le 4 $. It is easy to see that $ (0,0,0,0) $ is a solution of (\ref{equationsystemmain}), and (\ref{equationsystemmain}) has no solution containing only three zeros. If there are only two zeros in  $ (x_1,x_2,x_3,x_4) $, one can get that $(x,0,0,-x)$ and $(0,x,-x,0)$ are solutions of (\ref{equationsystemmain}), where $x\in\gf_{{5^n}}^*$.
That is, (\ref{equationsystemmain}) has $ 2(5^n-1) $ solutions containing only two zeros. We consider that there is only one zero in $ (x_1,x_2,x_3,x_4) $.  If $ x_{4}=0 $, then $ x_1,x_2 $ and $ x_3 $ are nonzero and they satisfy 
\[\left\{ \begin{array}{l}
	{x_1} + {x_2} + {x_3} = 0\\
	x^d_1 - x^d_2- x^d_3 = 0.
\end{array} \right.\]
Let $ y_i=\frac{x_i}{x_3} $ for $ i=1,2 $, we have $ y_1+y_2+1=0 $ and $ {y^d_1}-{y^d_2}-1=0 $ with $ y_1,y_2 \ne 0 $. Then $ y_2=-y_1-1 $ and $ (y_1+1)^d+y^d_1=1 $. By the proof of Theorem \ref{p=5theorem}, we know that equation $ (y_1+1)^d+y^d_1=1 $ has $ \delta_{-1}(1)-1 $ nonzero solutions in $ \gf_{{5^n}} $. Similarly, we can determine the number of solutions of (\ref{equationsystemmain}) with only $x_i=0$, $i=1,2$ and $3$. Then (\ref{equationsystemmain}) has $ 4(5^n-1)(\delta_{-1}(1)-1) $ solutions containing only one zero. We conclude that (\ref{equationsystemmain}) has $ 1+(5^n-1)(4\delta_{-1}(1)-2) $ solutions containing zeros. 

Next we consider $ x_i \ne 0 $ for $ 1 \le i \le 4 $. Let $ y_i = \frac{x_i}{x_4} $ for $ i=1,2,3 $. Then $ y_i \ne 0 $ and satisfy
\begin{equation}\label{systemsimple}
	\left\{ \begin{array}{l}
		{y_1} + {y_2} + {y_3} + 1 = 0\\
		{y^d_1} - {y^d_2} - {y^d_3} + 1 = 0.
	\end{array} \right.
\end{equation}
We denote by $ _{-1}n_4 $ the number of solutions of (\ref{systemsimple}). It can be seen that
\begin{equation}\label{N_4n_4relation}
	{}_{-1}N_4=1+(5^n-1)(4\delta_{-1}(1)-2)+(5^n-1){}_{-1}n_4.
\end{equation}
Recall that $ {y^d_i}=\chi(y_i)y^{-1}_i $, we discuss (\ref{systemsimple}) in the following cases.

{\it Case 1.} $ y_1, y_2 $ and $ y_3 $ are all square elements. Then (\ref{systemsimple}) becomes
\begin{equation}\label{n(1,1,1)}
	\left\{ \begin{array}{l}
		{y_1} + {y_2} + {y_3} + 1 = 0\\
		{y^{-1}_1} - {y^{-1}_2} - {y^{-1}_3} + 1 = 0.
	\end{array} \right.
\end{equation}
We denote by $ n_{(1,1,1)} $ the number of solution of (\ref{n(1,1,1)}). Next, we determine $ n_{(1,1,1)} $. If $ y_{1}=-1 $, then $ y_2=-y_3 $, $ (-1,y_{2},-y_{2}) $ is a solution of (\ref{n(1,1,1)}) when $ y_{2} $ is a square element. If $ y_{1}\ne-1 $, then $ y_{2}y_{3}=(y_{2}+y_{3})({{y^{-1}_2}+{y^{-1}_3}})^{-1}= -(y_{1}+1)({{y^{-1}_1}+1})^{-1}=-y_{1}$. Combining with the first equation of (\ref{n(1,1,1)}), we obtain $ y_{3}=\frac{y_{2}+1}{y_{2}-1} $ since $ y_{2}\ne1 $. Consequently, $ y_{1}=-\frac{y_{2}(y_{2}+1)}{y_{2}-1} $. If $ (y_{1},y_{2},y_{3}) $ is a desired solution, then $ \chi(\frac{y_{2}+1}{y_{2}-1})=\chi({y^2_2}-1)=1 $ and $ \chi(y_{2})=1 $. The number of such $ y_{2} $ is \[\begin{array}{l}
	\frac{1}{4}\sum\limits_{{y_2} \ne 0, \pm 1} {( {\chi ( {{y_2}} ) + 1} )( {\chi ( {{y^2_2} - 1} ) + 1} )} \\
	= \frac{1}{4}( {5^n} - 6+{\sum\limits_{{y_2} \in {\gf_{{5^n}}}} {\chi ( {y^{3}_2 - {y_2}} )}  + \sum\limits_{{y_2} \in {\gf_{{5^n}}}} {\chi ( {{y_2}} )}  + \sum\limits_{{y_2} \in {\gf_{{5^n}}}} {\chi ( {y_2^2 - 1} )  } })\\
	= \frac{1}{4}( {{\Gamma _{5,n}} + {5^n} - 7} ).
\end{array}\]
Note that there may exist $ y_{2} $ with $ \chi(y_{2})=\chi({y^2_2}-1)=1 $, such that $ y_{1}=-\frac{y_{2}(y_{2}+1)}{y_{2}-1}=-1 $, i.e., $ y_{2}=\pm 2 $. Two solutions $ (-1,2,-2) $ and $ (-1,-2,2) $ should be subtracted, this only occurs when $ n $ is even. We obtain
\[{n_{(1,1,1)}} = \left\{ \begin{array}{l}
	\frac{{3 \cdot {5^n} - 9}}{4} + \frac{1}{4}{\Gamma _{5,n}},~~\mathrm{if}~n~\mathrm{is}~\mathrm{odd}.\\
	\frac{{3 \cdot {5^n} - 17}}{4} + \frac{1}{4}{\Gamma _{5,n}},~\mathrm{if}~n~\mathrm{is}~\mathrm{even}.
\end{array} \right.\]

{\it Case 2.} Two of $ y_{1},y_{2} $ and $ y_{3} $ are square element, the other one is a nonsquare element. We first assume that $ \chi(y_{1})=\chi(y_{2})=1 $ and $ \chi(y_{3})=-1 $. Then (\ref{systemsimple}) becomes
\begin{equation}\label{n(1,1,-1)}
	\left\{ \begin{array}{l}
		{y_1} + {y_2} + {y_3} + 1 = 0\\
		{y^{-1}_1}- {y^{-1}_2} + {y^{-1}_3} + 1 = 0.
	\end{array} \right.
\end{equation}
We denote by $ n_{(1,1,-1)} $ the number of solution of (\ref{n(1,1,-1)}). Next, we determine $ n_{(1,1,-1)} $. Note that (\ref{n(1,1,-1)}) has no solution when $ y_{2}=\pm 1 $. If $ y_{2}\ne\pm1 $, we can obtain $ y_{1}+y_{3}\ne0 $, moreover,   $y_{1}y_{3}=(y_{1}+y_{3})({{y^{-1}_{1}}+{y^{-1}_{3}}})^{-1}= \left( {1 + {y_2}} \right)\frac{{{y_2}}}{{{y_2} - 1}}=\frac{y_{2}(y_{2}-1)}{y_{2}-1}$. It is mentioned that $ \chi(\frac{y_{2}+1}{y_{2}-1})=-1 $ since $ \chi(y_{1})=\chi(y_{2})=1 $ and $ \chi(y_{3})=-1 $. Moreover, $ y_{1} $ satisfies the following quadratic equation
\begin{equation}\label{n(1,1,-1)simple}
	{y^{2}_{1}}+(y_{2}+1)y_{1}+\frac{y_{2}(y_{2}+1)}{y_{2}-1}=0.
\end{equation}
The discriminate of (\ref{n(1,1,-1)simple}) is $ \Delta=(y_{2}+1)^2+\frac{y_{2}(y_{2}+1)}{y_{2}-1}=\frac{(y_{2}+1)(y_{2}-2)^2}{y_{2}-1}$. If (\ref{n(1,1,-1)simple}) has solutions in $ \gf_{{5^n}} $, then $ y_{2} $ must be $ 2 $ since $ \chi(\frac{y_{2}+1}{y_{2}-1})=-1 $. Consequently, $ y_{1}=y_{3}=1 $. Note that $ \chi(y_{3})=-1 $, we conclude that $ n_{(1,1,-1)}=0 $. Next we assume that $ \chi(y_{1})=\chi(y_{3})=1 $ and $ \chi(y_2)=-1 $. Then (\ref{systemsimple}) becomes \begin{equation}\label{n(1,-1,1)}
	\left\{ \begin{array}{l}
		{y_1} + {y_2} + {y_3} + 1 = 0\\
		{y^{-1}_1} + {y^{-1}_2} - {y^{-1}_3} + 1 = 0.
	\end{array} \right.
\end{equation}
We denote by $ n_{(1,-1,1)} $ the number of solution of (\ref{n(1,-1,1)}). Similarly, we can get that $ n_{(1,-1,1)}=0 $. Next, we assume that $ \chi(y_{2})=\chi(y_{3})=1 $ and $ \chi(y_1)=-1 $. Then (\ref{systemsimple}) becomes \begin{equation}\label{n(-1,1,1)}
	\left\{ \begin{array}{l}
		{y_1} + {y_2} + {y_3} + 1 = 0\\
		-{y^{-1}_1} - {y^{-1}_2} - {y^{-1}_3} + 1 = 0.
	\end{array} \right.
\end{equation}
We denote by $ n_{(-1,1,1)} $ the number of solution of (\ref{n(-1,1,1)}). Let $ z_{1}=\frac{y_{2}}{y_{3}} $, $ z_{2}=\frac{y_{1}}{y_{3}} $ and $ z_{3}=\frac{1}{y_{3}} $. Then $ \chi(z_{1})=\chi(z_{3})=1 $ and $ \chi(z_{2})=-1 $. From (\ref{n(-1,1,1)}) we obtain \begin{equation}\label{n(-1,1,1)trans}
	\left\{ \begin{array}{l}
		{z_1} + {z_2} + {z_3} + 1 = 0\\
		{z^{-1}_1} + {z^{-1}_2} - {z^{-1}_3} + 1 = 0.
	\end{array} \right.
\end{equation}
The number of solutions of (\ref{n(-1,1,1)trans}) is $ n_{(1,-1,1)} $, i.e., $ n_{(-1,1,1)}=n_{(1,-1,1)}=0 $.

{\it Case 3.} Two of $ y_{1},y_{2} $ and $ y_{3} $ are nonsquare element, the other one is a square element. We first assume that $ \chi(y_{1})=1 $ and $ \chi(y_{2})=\chi(y_{3})=-1 $. Then (\ref{systemsimple}) becomes
\begin{equation}\label{n(1,-1,-1)}
	\left\{ \begin{array}{l}
		{y_1} + {y_2} + {y_3} + 1 = 0\\
		{y^{-1}_1} + {y^{-1}_2} + {y^{-1}_3} + 1 = 0.
	\end{array} \right.
\end{equation}

We denote by $ n_{(1,-1,-1)} $ the number of solution of (\ref{n(1,-1,-1)}). Next, we determine $ n_{(1,-1,-1)} $. Note that $ y_{3}\ne-1 $, since $ \chi(y_{3})=-1 $. Then $ y_{1}+y_{2} \ne 0 $, moreover, $ y_{1}y_{2}=(y_{1}+y_{2})({y^{-1}_{1}}+{y^{-1}_{2}})^{-1}=(-y_{3}-1)({-y^{-1}_3}-1)^{-1}=y_{3} $. Then we obtain $ (y_{1}+1)(y_{2}+1)=0 $ from the first equation of (\ref{n(1,-1,-1)}). If (\ref{n(1,-1,-1)}) has solutions in $ \gf_{{5^n}} $, then $ y_{1} $ must be $ -1 $ since $ \chi(y_{2})=-1 $. Equation system (\ref{n(1,-1,-1)}) has solutions with types $ (-1,y_{2},-y_{2}) $, where $ \chi(y_{2})=-1 $. Then we have $ n_{(1,-1,-1)}=\frac{5^n-1}{2} $.

Then we assume that $ \chi(y_{2})=1 $ and $ \chi(y_{1})=\chi(y_{3})=-1 $. Then (\ref{systemsimple}) becomes
\begin{equation}\label{n(-1,1,-1)}
	\left\{ \begin{array}{l}
		{y_1} + {y_2} + {y_3} + 1 = 0\\
		-{y^{-1}_1} - {y^{-1}_2} + {y^{-1}_3} + 1 = 0.
	\end{array} \right.
\end{equation}

We denote by $ n_{(-1,1,-1)} $ the number of solution of (\ref{n(-1,1,-1)}). Next, we determine $ n_{(-1,1,-1)} $. It is similar to Case $ 1 $. We obtain $ y_{1}y_{2}=-y_{3} $ and $ y_{2}=\frac{y_{1}+1}{y_{1}-1} $ since $ y_{1}\ne \pm1 $. Consequently, $ y_{3}=-\frac{y_{1}(y_{1}+1)}{y_{1}-1} $. If $ (y_{1},y_{2},y_{3}) $ is a desired solution, then $ \chi(\frac{y_{1}+1}{y_{1}-1})=\chi({y^{2}_{1}}-1)=1 $ and $ \chi(y_{1})=-1 $. The number of such $ y_{1} $ is 

\[\begin{array}{l}
	- \frac{1}{4}\sum\limits_{{y_1} \ne 0, \pm 1} {( {\chi ( {{y_1}} ) - 1} )( {\chi ( {{y^2_1} - 1} ) + 1} )} \\
	=  - \frac{1}{4}( {\sum\limits_{{y_1} \in {\gf_{{5^n}}}} {\chi ( {y_1^3 - {y_1}} )}  + \sum\limits_{{y_1} \in {\gf_{{5^n}}}} {\chi ( {{y_1}} )}  - \sum\limits_{{y_1} \in {\gf_{{5^n}}}} {\chi ( {y_1^2 - 1} ) - {5^n} + 2} } )\\
	=  - \frac{1}{4}( {{\Gamma _{5,n}} - {5^n} + 3} ).
\end{array}\]

We obtain $ n_{(-1,1,-1)}= -\frac{1}{4}{\Gamma _{5,n}} + \frac{{{5^n} - 3}}{4} $. Then we assume that $ \chi(y_{3})=1 $ and $ \chi(y_{1})=\chi(y_{2})=-1 $. Then (\ref{systemsimple}) becomes
\begin{equation}\label{n(-1,-1,1)}
	\left\{ \begin{array}{l}
		{y_1} + {y_2} + {y_3} + 1 = 0\\
		{y^{-1}_1} - {y^{-1}_2} + {y^{-1}_3}-1 = 0.
	\end{array} \right.
\end{equation}

We denote by $ n_{(-1,-1,1)} $ the number of solution of (\ref{n(-1,-1,1)}). Similarly, we can get that  $ n_{(-1,-1,1)}= -\frac{1}{4}{\Gamma _{5,n}} + \frac{{{5^n} - 3}}{4} $. 

{\it Case 4.}  $ y_{1},y_{2} $ and $ y_{3} $ are all nonsquare elements, i.e., $ \chi(y_{1})=\chi(y_{2})=\chi(y_{3})=-1 $. Then (\ref{systemsimple}) becomes
\begin{equation}\label{n(-1,-1,-1)}
	\left\{ \begin{array}{l}
		{y_1} + {y_2} + {y_3} + 1 = 0\\
		-{y^{-1}_1} + {y^{-1}_2} + {y^{-1}_3} + 1 = 0.
	\end{array} \right.
\end{equation}

We denote by $ n_{(-1,-1,-1)} $ the number of solution of (\ref{n(-1,-1,-1)}). Next, we determine $ n_{(-1,-1,-1)} $. Note that $ y_{1}\ne-1 $, since $ \chi(y_{1})=-1 $. It is similar to Case $ 2 $. Since $ \chi(y_{1})=\chi(y_{2})=\chi(y_{3})=-1 $, we can get that (\ref{n(-1,-1,-1)}) has no solution in $ \gf_{{5^n}} $.

By discussions as above, ${{}_{-1}n_4} = \sum \limits_{{i,j,k} \in \{\pm1\}}{{n_{\left( {i,j,k} \right)}}} $ and then $ {}_{-1}N_{_4} $ follows by (\ref{N_4n_4relation}) and the value of $ \delta_{-1}(1) $.
\end{proof}

Then we can determine the $ (-1) $-differential spectrum of $ F(x)=x^{\frac{5^{n}-3}{2}} $.

\begin{theorem}\label{p=5spectrum}
The power function $ F(x)=x^{\frac{5^n-3}{2}} $ over $ \gf_{{5^n}} $ when $ c=-1 $ is  AP$c$N  with $ (-1)$-differential spectrum 
\[\begin{gathered}
	\mathbb{S} = \left\{ {_{ - 1}{\omega _0} =  - \frac{1}{8}{\Gamma _{5,n}} + \frac{{3 \cdot {5^n} - 5}}{8}{,_{ - 1}}{\omega _1} = \frac{1}{4}{\Gamma _{5,n}} + \frac{{{5^n} + 5}}{4},} \right. \hfill \\
	~~~~~~{\text{ }}\left. {_{ - 1}{\omega _2} =  - \frac{1}{8}{\Gamma _{5,n}} + \frac{{3 \cdot {5^n} - 5}}{8}{\text{ }}} \right\}. \hfill \\ 
\end{gathered} \]
when $ n $ is even, and with $ (-1) $-differential spectrum
\[\begin{gathered}
	\mathbb{S} = \left\{ {_{ - 1}{\omega _0} =  - \frac{1}{8}{\Gamma _{5,n}} + \frac{{3 \cdot {5^n} - 13}}{8}{,_{ - 1}}{\omega _1} = \frac{1}{4}{\Gamma _{5,n}} + \frac{{{5^n} + 13}}{4}} \right., \hfill \\
	~~~~~~{\text{ }}\left. {_{ - 1}{\omega _2} =  - \frac{1}{8}{\Gamma _{5,n}} + \frac{{3 \cdot {5^n} - 13}}{8}{\text{ }}} \right\}. \hfill \\ 
\end{gathered} \]
when $ n $ is odd, where $ \Gamma_{5,n} $ is determined in (\ref{Gamma_{(5,n)}}).
\end{theorem}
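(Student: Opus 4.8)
The plan is to read off the spectrum purely from the counting identities of Theorem~\ref{properties1}, treating Theorem~\ref{p=5theorem} and Lemma~\ref{N_4value} as black boxes. By Theorem~\ref{p=5theorem} we have ${}_{-1}\Delta_F=2$, so ${}_{-1}\omega_i=0$ for every $i\ge 3$ and the $(-1)$-differential spectrum of $F$ involves only the three unknowns ${}_{-1}\omega_0,{}_{-1}\omega_1,{}_{-1}\omega_2$; in particular $F$ is AP$c$N for $c=-1$. Applying Theorem~\ref{properties1} with $c=-1$ and $q=5^n$, the identities \eqref{omegaiomega} and \eqref{i^2omega_ieqution} become the linear system
\begin{align*}
{}_{-1}\omega_0+{}_{-1}\omega_1+{}_{-1}\omega_2&=5^n,\\
{}_{-1}\omega_1+2\cdot{}_{-1}\omega_2&=5^n,\\
{}_{-1}\omega_1+4\cdot{}_{-1}\omega_2&=\frac{{}_{-1}N_4-1}{5^n-1}-\gcd\left(\tfrac{5^n-3}{2},\,5^n-1\right).
\end{align*}

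Next I would record that $\gcd\!\left(\tfrac{5^n-3}{2},5^n-1\right)=1$, which both explains why $F$ is a permutation (as the section title promises) and shows that the correction term above equals $1$. This is elementary: writing $d=\tfrac{5^n-3}{2}$ we have $5^n-1=2d+2$, hence $\gcd(d,5^n-1)=\gcd(d,2)$, and since $5\equiv1\pmod 4$ gives $5^n-3\equiv 2\pmod 4$, the integer $d$ is odd, so the gcd is $1$.

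With this in hand the system is essentially triangular. Subtracting the second equation from the first yields ${}_{-1}\omega_0={}_{-1}\omega_2$; subtracting the second from the third yields ${}_{-1}\omega_2=\frac12\!\left(\frac{{}_{-1}N_4-1}{5^n-1}-5^n-1\right)$; and then ${}_{-1}\omega_1=5^n-2\cdot{}_{-1}\omega_2$. Substituting the value of ${}_{-1}N_4$ from Lemma~\ref{N_4value}, the term $4(5^n-1)$ (resp.\ $2(5^n-1)$) contributes the constant $4$ (resp.\ $2$) when $n$ is even (resp.\ odd), and the term $(5^n-1)\!\left(-\frac14\Gamma_{5,n}+\frac{7\cdot5^n-17}{4}\right)$ contributes $-\frac14\Gamma_{5,n}+\frac{7\cdot5^n-17}{4}$; a short simplification then gives ${}_{-1}\omega_0={}_{-1}\omega_2=-\frac18\Gamma_{5,n}+\frac{3\cdot5^n-5}{8}$ and ${}_{-1}\omega_1=\frac14\Gamma_{5,n}+\frac{5^n+5}{4}$ for $n$ even, and the same formulas with $3\cdot5^n-5$ replaced by $3\cdot5^n-13$ and $5^n+5$ by $5^n+13$ for $n$ odd, which is exactly the claimed multiset.

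So the present argument is short: all the genuine difficulty is packaged in Lemma~\ref{N_4value} (the evaluation of ${}_{-1}N_4$ through the cubic character sum $\Gamma_{5,n}$, which itself rests on the case analysis over the sets $S_{i,j}$ in the proof of Theorem~\ref{p=5theorem}). The only points requiring care here are the verification that $\gcd(d,5^n-1)=1$, so that the subtracted term in \eqref{i^2omega_ieqution} is exactly $1$, and the observation that ${}_{-1}\omega_i=0$ for $i\ge 3$, which is precisely Theorem~\ref{p=5theorem}. As a final sanity check one notes ${}_{-1}\omega_2>0$: from the Weil-type bound $|\Gamma_{5,n}|\le 2\sqrt{5^n}$ implicit in \eqref{Gamma_{(5,n)}} one has $-\frac18\Gamma_{5,n}+\frac{3\cdot5^n-5}{8}>0$, confirming both that the displayed multiset is the true spectrum and that $F$ is indeed AP$c$N (not P$c$N) for $c=-1$.
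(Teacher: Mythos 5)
Your proposal is correct and follows essentially the same route as the paper: the paper likewise combines Theorem~\ref{p=5theorem} (so that only ${}_{-1}\omega_0,{}_{-1}\omega_1,{}_{-1}\omega_2$ can be nonzero), Lemma~\ref{N_4value}, and the identities \eqref{omegaiomega} and \eqref{i^2omega_ieqution} of Theorem~\ref{properties1}, and then solves the resulting linear system. You merely make explicit two details the paper leaves implicit, namely the verification that $\gcd\bigl(\tfrac{5^n-3}{2},5^n-1\bigr)=1$ (already noted in the proof of Theorem~\ref{p=5theorem}) and the elimination steps in the linear algebra.
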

\begin{proof}
By (\ref{i^2omega_ieqution}) and Lemma \ref{N_4value}, we obtian that the elements in the $ (-1) $-differential spectrum satisfy
\[\sum\limits_{i = 0}^{{}_c{\Delta _F}} {{i^2}\cdot{{}_{-1}\omega _i}}  = \left\{ \begin{array}{l}
	- \frac{1}{4}{\Gamma _{5,n}} + \frac{{7 \cdot {5^n} - 5}}{4},~~\mathrm{if}~n~\mathrm{is}~\mathrm{even}.\\
	- \frac{1}{4}{\Gamma _{5,n}} + \frac{{7 \cdot {5^n} - 13}}{4},~\mathrm{if}~n~\mathrm{is}~\mathrm{odd}.
\end{array} \right.\]
Then by solving the equation (\ref{omegaiomega}) in Lemma \ref{properties1}. The proof is finished.
\end{proof}	

\section{concluding remarks}
In this paper, we mainly studied the $c$-differential spectra of power functions over finite fields. Some basic properties of the $c$-differential spectrum of a power function were given. The $c$-differential spectra of some classes of power functions were determined. Moreover, we proposed a class of AP$c$N function over $\gf_{{5^n}}$ with $c=-1$. Our future work is to find more power functions with low $c$-differential uniformity and to determine their $c$-differential spectra. It is worth finding applications of  these functions in sequences, coding theory and combinatorial designs.

\end{document}